\newtheorem{theorem}{Theorem}[section]
\newtheorem{definition}[theorem]{Definition}
\newtheorem{lemma}[theorem]{Lemma}
\newenvironment{remark}[1][Remark]{\textbf{#1:} }{ ~}
\newcounter{listagem}
\newcommand{\blista}{\begin{list}{\roman{listagem})}{\usecounter{listagem}}}
\newcommand{\elista}{\end{list}}
\newcommand{\beq}{\begin{equation}}
\newcommand{\eeq}{\end{equation}}
\newcommand{\beqn}{\begin{eqnarray}}
\newcommand{\eeqn}{\end{eqnarray}}
\newcommand{\ov}{\overline}
\newcommand{\ul}{\underline}
\def\le{<}
\def\ge{>}
\newcommand{\Cl}{{C \kern -0.1em \ell}}
\newcommand{\BR}{\mathbb{R}}
\newcommand{\BC}{\mathbb{C}}
\newcommand{\K}{\mathbf{k}}
\newcommand{\ds}{\displaystyle}
\begin{document}

\title{The Schr\"odinger semigroup on some flat and non flat manifolds}
\author{R. S. Krau{\ss}har$^\dag$  ~~~~ M.M. Rodrigues$^\ddag$ ~~~~ N. Vieira$^*$\\ \\
{\small $^\dag$ Fachbereich Mathematik} \\ {\small Technische Universit\"at Darmstadt }\\{\small Schlo{\ss}gartenstr. 7}\\{\small 64289 Darmstadt, Germany.} \\ {\small E-mail: krausshar@mathematik.tu-darmstadt.de} \\ \\
{\small $^\ddag$ Department of Mathematics} \\ {\small University of Aveiro} \\ {\small Campus Universit\'ario de Santiago} \\ {\small 3810-193 Aveiro, Portugal} \\ {\small E-mail: mrodrigues@ua.pt} \\ \\
{\small $^*$ Center of mathematics of University of Porto} \\ {\small Faculty of Science} \\ {\small University of Porto} \\ {\small Rua do Campo Alegre} \\ {\small 4169-007 Porto, Portugal} \\ {\small E-mail: nvieira@fc.up.pt}} \maketitle


\begin{abstract}
In this paper we apply known techniques from semigroup theory to the Schr\"odinger problem with initial conditions. To this end, we  define the regularized Schr\"odinger semigroup acting on a space-time domain and show that it is strongly continuous and contractive in $L_p,$ with $\frac{3}{2} < p < 3.$  These results can easily be extended to the case of conformal operators acting in the context of differential forms, but they require positiveness conditions on the curvature of the considered Minkowski manifold. For that purpose, we will use a Clifford algebra setting in order to highlight the geometric characteristics of the manifold. We give an application of such methods to the regularized Schr\"odinger problem with initial condition and we will extended our conclusions to the limit case. For the torus case and a class of non-oriented higher dimensional M\"obius strip like domains we also give some explicit formulas for the fundamental solution.
\end{abstract}

\textbf{Keywords:} Clifford analysis, Semigroup theory, Schr\"odinger equation, Dissipative operators, Hypoelliptic operators
\par\medskip\par
\textbf{MSC2010:} Primary 30G35; Secondary 47H06, 35H10
\par\medskip\par
\textbf{PACS numbers:} 02.30 Jr, 02.40 Vh, 03.65.-w


\section{Introduction}

One of the most important PDE's is the Schr\"odinger equation. Physically, this equation describes the space and time dependence of quantum mechanical systems. It is of extreme importance to the theory of quantum mechanics, playing a role analogous to Newton's second law in classical mechanics. In the mathematical formulation of quantum mechanics, each system is associated with a complex Hilbert space such that each instantaneous state of the system is described by a unit vector in that space. This state vector encodes the probabilities for the outcomes of all possible measurements applied to the system. As the state of a system generally changes over time, the state vector is a function of time. The Schr\"odinger equation provides a quantitative description of the rate of change of the state vector.

Formally, the Schr\"odinger equation is expressed by
\beqn
H(x) \psi(x,t) = \pm i \hbar \partial_t \psi(x,t), \nonumber
\eeqn
where $i$ is the imaginary unit, $x$ the space-variable, $t$ the time-variable, $\partial_t$ is the partial derivative with respect to $t$, $\hbar$ is the reduced Planck's constant (Planck's constant divided by $2\pi$), $\psi(x,t)$ is the wave function, $H(x)$ is the Hamiltonian (self-adjoint operator acting on the space variable), and $\pm$ represents the forward or backward case, respectively.

The Hamiltonian describes the total energy of the system. In analogy to the occurrence of the force in Newton's second law, its exact form is not provided by the Schr\"odinger equation. It must be independently determined by physical properties of the system.

In order to simplify the calculations in this paper we omit the reduced Planck's constant, and we will concentrate ourselves on the backward case. Nevertheless, all the theoretical results that we present can directly be adapted to the forward case (for more details about the Schr\"odinger equation, see for instance \cite{BS} and \cite{Sch}).

There are several areas of Mathematics that can be applied in the study of PDE's. However, most of them are only  efficient when we  deal with elliptic operators and fail in the context of parabolic and hyperbolic operators, as for example, in the case of the Schr\"odinger operator or the heat operator. In this paper, we will try to apply some of the elliptic techniques used to study the heat problem in the analysis of the Schr\"odinger problem. Nevertheless, we need to take into account that in many aspects the Schr\"odinger operator is substantially different from the heat operator. For example, notice that the Galilean group is the invariance group associated to the first equation, while the parabolic group is the invariance group associated to the heat equation (see \cite{T}). The Schr\"odinger equation is related to the Minkowski space-time metric, while the heat equation is linked to the parabolic space-time metric (see \cite{T}).  Also, and more important for our consideration, under an analytical point of view, the singularity $t=0$ of the corresponding  fundamental solutions is removable outside the origin in the second case. However, in the case dealing with the Schr\"odinger equation, this is not true. This fact forces us to introduce a regularization procedure prior to the treatment by semigroup theory or hypoelliptic theory (see \cite{CV}, \cite{KV1}, \cite{T} and \cite{V}).

In this paper we consider an approach that combines semigroup theory with Clifford analysis methods and provides a successful solution of the Schr\"odinger problem. The main results presented here are based on Eichhorn's ideas (see \cite{E}). In his paper the author presents the heat semigroup acting either on tensors or differential forms, with values in a vector bundle and applies it to solve the heat problem with initial data. The implementation of a regularization procedure allows an extension of the results related with the heat operator to the regularized Schr\"odinger operator.

Hence, the paper is structured as follows:  in Section 2 we present the necessary notions regarding Clifford analysis, G\"unter derivatives, semigroup theory, differential forms and we describe our regularization procedure. In the following section, we use the ideas in \cite{E} and \cite{G} to construct the regularized Schr\"odinger semigroup and to prove under which conditions the Laplacian is dissipative in $L_p$, independently of considering flat or non-flat domains.
We give some explicit representation formulas for the solutions of the Schr\"odinger problem on $n$-tori associated with different spin structures. Then we explain how to adapt these constructions to a class of non-orientable flat manifolds that consists of higher dimensional generalizations of the M\"obius strip and the Klein bottle.
\par\medskip\par
To study the case of non-flat manifolds we will consider the Bochner and G\"unter-Laplacians acting on differential forms. In Section 4 we will use some of the properties of the obtained semigroup to solve the regularized Schr\"odinger problem with initial condition. In the last section we finally explain how we can extend the results presented in Section 4 to a geometrically more general framework.


\section{Preliminaries}

\subsection{Clifford analysis}\label{sec2.1}

We consider the $n-$dimensional vector space $\BR^n$ endowed with an orthonormal basis $\{e_1,\ldots,e_n\}.$ We define the universal Clifford algebra $\Cl_{0,n}$ as the $2^n-$di\-men\-sional associative algebra which preserves the multiplication rules $e_ie_j + e_j e_i = -2\delta_{i,j}.$ A basis for $\Cl_{0,n}$ is given by $e_0=1$ and $e_A = e_{h_1}\ldots e_{h_k},$ where $A=\{h_1,\ldots,h_k\} \subset \{1,\ldots,n\},$ for $1 \leq h_1 < \ldots < h_k \leq n.$ Each element $x \in \Cl_{0,n}$ will be represented by $x = \sum_A x_A e_A,$ $x_A \in \BR.$ Let $\BC_n = \Cl_{0,n} \otimes \BC$ the complexification of the universal Clifford algebra presented previously.

We introduce the Euclidean Dirac operator $D = \sum_{j=1}^n e_j \partial_{x_j}$ associated to the flat metric $ds^2=dx_1^2+\cdots+ dx_n^2$. It factorizes the $n-$di\-men\-sional Laplacian, that is, $D^2 = - \Delta.$ A $\BC_n-$valued function defined on an open domain $\Omega,$ $u: \ul{\Omega} \subset \BR^n \rightarrow \BC_n$ is said to be left monogenic if it satisfies $Du = 0$ on $\Omega$ (resp. right-monogenic if it satisfies $uD=0$ on $\Omega$). We remark that whenever $u$ is scalar, $Du$ coincides with the gradient $\nabla u$. For more details about monogenic functions, we refer the reader for instance to \cite{DSS} or elsewhere.

We further say that a $\BC_n$-valued function $u$ belongs to a certain function space if and only if all its coordinate-functions $u_A$ belong to the corresponding (real or complex) function space. For instance, $ u = \sum_A u_A e_A$ belongs to $L_p (\Omega, \BC_n)$ if and only if all its complex valued coordinate functions $u_A$ are in $L_p(\Omega, \BC_n).$ Whenever no confusion arises, the $\BC_n-$valued function spaces will be denoted by the same notation of its real counterparts, that is, $L_p(\Omega, \BC_n)$ will be identified with $L_p(\Omega)$. For general (Clifford algebra valued) $L_p$ spaces, the usual $L_p$-norm is defined by
\beqn
||u||_{L_p} := \left ( \int_\Omega |u(x)|^p ~ dx \right)^{\frac{1}{p}} < \infty, \nonumber
\eeqn
with $1 \leq p < \infty$ and
\beqn
|u|^2 := 2^n \left[u ~ \ov{u} \right]_0 = 2^n \sum_A |u_A|^2 e_A \ov{e}_A = 2^n \sum_A |u_A|^2. \nonumber
\eeqn
Here, $[\cdot]_0$ denotes the scalar part of the element.

For $p \not= 2$ they are Banach spaces while $L_2(\Omega)$ can be extended to a Hilbert space by introducing an inner product of the form
\beqn
<u, v> := \int_{\Omega} u(x) | v(x) dx = 2^n \int_{\Omega} \left[u(x) ~ \ov{v}(x) \right]_0 dx. \nonumber
\eeqn
Here, $u, v \in L_2(\Omega)$ and
\beqn
u(x) ~ \ov v(x) = u ~ \ov{v} := \sum_{A, B \subset N} u_A ~ \ov{v}_B ~ e_A ~ \ov{e}_B. \nonumber
\eeqn


\subsection{Regularization of the non-stationary Schr\"odinger operator}\label{subsec2.2}

The following fundamental solution of the time-dependent Schr\"odinger operator
\beqn
e_-(x,t) = i ~ \frac{H(t)}{(4 \pi i t)^\frac{n}{2}} \exp \left(-i ~{\frac{|x|^2}{4t}}  \right). \nonumber
\eeqn
has non-removable singularities in the whole hyperplane $t=0$. This is one reason why one cannot directly apply the methods of hypoelliptic operators. This feature carries additional problems for the study of the arising integral operators, where we cannot guarantee the convergence (in the classical sense) of the integrals that define those operators.

In order to solve this problem we need to regularize the fundamental solution and the arising operators (see \cite{CV}, \cite{T} and \cite{V}). This process of regularization creates a family of operators and corresponding fundamental solutions, which are locally integrable over $\ds \BR^n \times \BR_0^+ \setminus \{ \ul{0}, 0 \}$. Moreover, this family will converge to the original operators and fundamental solutions when we consider the limit process $\epsilon \rightarrow 0^+$.

To this end, we will replace the imaginary unit appearing in the Schr\"odinger operator by the value $\ds \K=\frac{\epsilon + i}{\epsilon^2 + 1}$ and we obtain a new operator $-\Delta \pm \K \partial_t$. For each $\epsilon > 0$ the associated operator $-\Delta \pm \K \partial_t$ is a hypoelliptic operator, in the sense of Theorem 1.8 presented in Section 1.3 of \cite{ABN}. This modification has a good behavior of the associated integral operators. More details about the regularization of the Schr\"odinger operator can be found in \cite{CV}, \cite{KV1}.


\subsection{Basic notions in semigroup theory}

Consider an operator $F: D_F \subset X \rightarrow X$ where we assume that $D_F$ is a dense set in $X$ and that $F$ is a closed operator. First we introduce the following characterization of a normalized tangent functional via the complex version of the Hahn-Banach theorem.
\begin{theorem} (c.f. \cite{HHZ})
Let $X$ be a complex Banach space and $Y$ be a linear subspace of $X$. If $u \in Y^\ast$, then there exist a normalized tangent functional $u^\ast \in X^\ast$ such that $u^\ast|_Y = u$ and $||u^\ast||_{X^\ast} = ||u||_{Y^\ast}$.
\end{theorem}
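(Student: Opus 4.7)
The statement is a version of the Hahn--Banach extension theorem tailored to complex Banach spaces, and in semigroup theory the name \emph{normalized tangent functional} refers to the norm-preserving nature of the extension $u^{\ast}$. The plan is to reduce the problem to the classical real Hahn--Banach theorem via the standard complex-to-real device, and then re-complexify.

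First I would set $M := \|u\|_{Y^{\ast}}$ and observe that the Minkowski functional $p: X \to \BR$ given by $p(x) = M\,\|x\|_X$ is a sublinear functional majorizing $|u|$ on $Y$. Next I would split $u$ into its real and imaginary parts. Writing $u_1(y) := \Re u(y)$, the identity
\beqn
u(y) = u_1(y) - i\, u_1(i y), \qquad y \in Y, \nonumber
\eeqn
identifies $u$ uniquely in terms of the $\BR$-linear functional $u_1$ on the real vector space underlying $Y$. This is the standard complex/real correspondence for functionals on complex spaces, and it reduces the question to extending $u_1$.

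Now I would invoke the real Hahn--Banach theorem: there exists an $\BR$-linear functional $v: X \to \BR$ with $v|_Y = u_1$ and $v(x) \le p(x) = M\|x\|_X$ for every $x \in X$. Since $-v(x) = v(-x) \le M\|x\|_X$ as well, we also get $|v(x)| \le M\|x\|_X$, so $v$ is continuous with real-operator norm at most $M$. Define then
\beqn
u^\ast(x) := v(x) - i\, v(i x), \qquad x \in X. \nonumber
\eeqn
Routine verification (check additivity in $x$ and compatibility with multiplication by $i$, hence by any complex scalar) shows that $u^\ast$ is $\BC$-linear. By construction $u^\ast|_Y = u_1(\cdot) - i u_1(i\cdot) = u$.

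Finally I would check the norm equality. For $x \in X$, choose a unimodular $\lambda \in \BC$ with $\lambda\,u^\ast(x) = |u^\ast(x)|$; then $|u^\ast(x)| = u^\ast(\lambda x) = v(\lambda x) \le M\|\lambda x\|_X = M\|x\|_X$, giving $\|u^\ast\|_{X^\ast} \le M = \|u\|_{Y^\ast}$. The reverse inequality is automatic because $u^\ast$ extends $u$. The only mildly delicate step is the trick with the unimodular $\lambda$ in the final norm estimate, which is what forces one to actually use the complex version (an elementwise bound $|v(x)|\le M\|x\|$ alone would not directly transfer to $|u^\ast(x)|$ without this rotation).
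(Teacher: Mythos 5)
Your argument is correct: it is the standard Bohnenblust--Sobczyk proof of the complex Hahn--Banach theorem (real-part decomposition $u = u_1 - i\,u_1(i\,\cdot)$, real extension dominated by $M\|\cdot\|_X$, re-complexification, and the unimodular rotation to recover the norm bound), and every step checks out. The paper itself states this result without proof, citing \cite{HHZ}, and the proof given there is essentially the one you reconstructed, so there is nothing to add or correct.
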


Taking into account the previous result, $F$ is called dissipative if for every $u \in D_F$ there exists a normalized tangent functional such that $\langle u^\ast, Fu \rangle \leq 0$. The closure of a dissipative operator is dissipative. For the particular case where $X$ is  a Hilbert space and $F$ a symmetric operator, the condition $\langle Fu,u \rangle \leq 0$ for all $u \in D_F$ implies that $F$ is dissipative. We say that a $C^0-$semigroup $\{ T_t\}_{t \in \BR_0^+}$ of bounded linear operators $T_t \in L(X,X)$, where $X$ is a Banach space, is called a contraction semigroup if $||T_t|| \leq 1$, for $0 \leq t < +\infty.$ The infinitesimal generator $F$ of a contraction semigroup can be characterized by the following property.
\begin{lemma}\label{lem1}
(c.f. \cite{RS})
Suppose that $D_F$ is dense. A closed operator $F:D_F \rightarrow X$ is the infinitesimal generator of a contraction semigroup if and only if $F$ is dissipative and $\mathrm{Range}(\mu - F) = X$, for some $\mu > 0.$
\end{lemma}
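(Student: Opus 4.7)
The statement is the classical Lumer--Phillips characterization of generators of contraction semigroups, and I would structure the proof as a two-direction argument.

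For the necessity direction, assume $F$ generates a contraction semigroup $\{T_t\}_{t\in\BR_0^+}$. To establish dissipativity, fix $u\in D_F$ and pick a normalized tangent functional $u^\ast$ at $u$ via the previous theorem (so $\langle u^\ast,u\rangle = \|u\|^2$ and $\|u^\ast\|=\|u\|$). Then for $t>0$,
\[
\mathrm{Re}\,\langle u^\ast, T_t u - u\rangle = \mathrm{Re}\,\langle u^\ast, T_t u\rangle - \|u\|^2 \leq \|u^\ast\|\,\|T_t u\| - \|u\|^2 \leq 0,
\]
using $\|T_t\|\leq 1$. Dividing by $t$ and letting $t\to 0^+$ yields $\mathrm{Re}\,\langle u^\ast, Fu\rangle \leq 0$, which is the required dissipativity condition. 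The range condition then comes for free from the standard Hille--Yosida theory: every $\mu>0$ lies in the resolvent set of a contraction semigroup generator, hence $\mathrm{Range}(\mu-F)=X$ for all $\mu>0$.

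For the sufficiency direction, assume $F$ is closed, dissipative, densely defined, and $\mathrm{Range}(\mu_0-F)=X$ for some $\mu_0>0$. The key step is to deduce the Hille--Yosida estimate
\[
\|(\mu-F)^{-1}\|\leq \mu^{-1}, \qquad \mu>0.
\]
The first ingredient is the elementary consequence of dissipativity: for every $u\in D_F$ and $\mu>0$,
\[
\mu\|u\|^2 = \mathrm{Re}\,\langle u^\ast,\mu u\rangle \leq \mathrm{Re}\,\langle u^\ast,\mu u - Fu\rangle \leq \|u^\ast\|\,\|(\mu-F)u\| = \|u\|\,\|(\mu-F)u\|,
\]
so $\|(\mu-F)u\|\geq \mu\|u\|$. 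This forces $\mu_0-F$ to be injective, and combined with surjectivity at $\mu_0$ gives $\mu_0\in\rho(F)$ with the stated bound. The second ingredient is a continuity/connectedness argument: the set $\{\mu>0 : \mu\in\rho(F)\}$ is open (standard perturbation of the resolvent) and the above a priori estimate prevents the resolvent from blowing up, so one can extend from $\mu_0$ to all of $(0,\infty)$, obtaining $(0,\infty)\subset\rho(F)$ with $\|(\mu-F)^{-1}\|\leq\mu^{-1}$.

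With the Hille--Yosida estimate in hand, invoke the Hille--Yosida theorem: a densely defined, closed operator $F$ on a Banach space with $(0,\infty)\subset\rho(F)$ and $\|(\mu-F)^{-1}\|\leq\mu^{-1}$ generates a $C^0$-contraction semigroup. Concretely, one defines the Yosida approximants $F_\mu := \mu F(\mu-F)^{-1}$, shows that $\{e^{tF_\mu}\}$ is a contraction semigroup for each $\mu$, and passes to the strong limit $T_t u := \lim_{\mu\to\infty} e^{tF_\mu}u$ on $D_F$ to obtain the desired contraction semigroup with generator $F$. I expect the main obstacle to be the extension step in the sufficiency direction: propagating the single range hypothesis $\mathrm{Range}(\mu_0-F)=X$ to all $\mu>0$, which needs the closedness of $F$ together with the connectedness of $(0,\infty)$ to prevent the resolvent set from being disconnected from $\mu_0$.
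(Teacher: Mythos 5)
Your argument is correct: it is the standard Lumer--Phillips proof. The paper itself offers no proof of this lemma, quoting it directly from Reed and Simon \cite{RS}, so there is no in-paper argument to compare against; your two-direction scheme --- dissipativity via a normalized tangent functional together with $\|T_t\|\leq 1$ for necessity, and for sufficiency the a priori bound $\|(\mu-F)u\|\geq\mu\|u\|$, the open-and-closed propagation of the range condition from the single $\mu_0$ to all of $(0,\infty)$, and the Hille--Yosida theorem via Yosida approximants --- is precisely the proof found in that reference.
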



\subsubsection{The Minkowski metric}

A pseudo-Riemannian metric on a smooth manifold $M$ is a symmetric 2-tensor field $g$ that is non-degenerate at each point $x \in M$. By far the most important pseudo-Riemannian metrics are the Lorentz metrics, which are pseudo-Riemannian metrics of index 1. The standard example of a Lorentz metric is the Minkowski metric, that is, a metric $g$ on $\BR^{n+1}$ that is written in terms of the local coordinates $(\xi_1,\ldots,\xi_n,\tau)$ as
\beqn
g(d\overrightarrow{\xi},d\overrightarrow{\xi})=(d\xi_1)^2 + \ldots + (d\xi_n)^2 - (d\tau)^2. \label{3.5}
\eeqn

The separation or difference of the physical characteristics of the space coordinates (the $\xi$ directions) and the time coordinate (the $\tau$ direction) arises from the fact that they are subspaces on which $g$ is positive or negative definite, respectively.


\subsection{Differential forms theory}

Here, we recall some basic definitions from the theory of differential forms.

\begin{definition}
The space $\bigwedge_k \Omega$ of differential $k-$ forms at $x$ is the set of all $k-$linear alternating functions
 $$\omega : \Omega \times \cdots \Omega \rightarrow \Omega.$$
\end{definition}

The space $\bigwedge_k \Omega $ is a vector space under the operations of addition and scalar multiplication.

Let $L_p(\bigwedge_k \Omega)$ be the corresponding space of  $k-$forms with values in $\BC_n$ and $L_p^0(\bigwedge_k \Omega)$ of those which have a compact support.

Following \cite{L}, we now present the Laplace operator in the context of differential forms. The concept of harmonic functions can be extended to differential forms as follows. Let $\star$ denotes the Hodge star operator. The latter is a linear operator acting as
\begin{align*}
& \star (1) = \pm dx_1 \wedge dx_2 \wedge \ldots \wedge dx_n, \\
& \star (dx_1 \wedge dx_2 \wedge \ldots \wedge dx_n) = \pm 1, \\
& \star (dx_1 \wedge dx_2 \wedge \ldots \wedge dx_p) = \pm dx_{p+1} \wedge \ldots \wedge dx_n,
\end{align*} where the $\pm$ sign corresponds to the positive or negative orientation, respectively, of the form $dx_1 \wedge dx_2 \wedge \ldots \wedge dx_p.$

We introduce its adjoint $d^\ast$ acting on $k-$forms by setting $d^\ast=(-1)^{n(p+1)+1} \star d \star.$ While the exterior differentiation operator maps $k-$forms to $(k+1)-$forms, its adjoint maps $k-$forms to $(k-1)-$forms. A $k-$form $\omega$ is called  harmonic if and only if it is closed ($d\omega =0$) and co-closed ($d^\ast \omega =0$). Then we introduce the Hodge Laplacian, also called Laplace-Beltrami operator, by $\Delta_H = d^\ast d + dd^\ast.$

Moreover, differential forms are also used to express tensorial actions. However, in view of \cite{CBCF} and \cite{Graf}, we can identify tensors on $\Omega$ with elements of the universal Clifford algebra $\Cl_{0,n}$. This fact allows us to avoid the use of vector bundles, metric connections, and other heavy machinery used in \cite{E}.


\subsection{Laplace operators on manifolds}

Until now we only have considered domains in $\BR^n$ with the standard Euclidean Laplace operator $\Delta = \sum_{i=1}^n \partial_{x_i}^2.$ We now look into more complex structures of manifolds endowed with an arbitrary metric. In this section, we  introduce the Bochner-Laplacian and G\"unter-Laplacian, the operators which reflect the new metric structure.

The Bochner-Laplacian is given by $\Delta_B = \nabla^\ast \nabla$, where $\nabla^\ast$ stands for the formal adjoint of the L\'evi-Civita connection (for more details see \cite{DMM}). This Laplacian and the Euclidean one introduced in Subsection \ref{sec2.1} are related by the following special case of the Weitzenbock identity, proved in \cite{GM},
\begin{eqnarray}
\nabla^\ast \nabla = -\Delta - \mathrm{Ric},\label{c1.5*}
\end{eqnarray}
where $\mathrm{Ric}$ is the Ricci curvature on $\Omega$.

It is known that one possible extension of the most basic partial differential operators on an domain  $\ul{\Omega} \subset \BR^n$, can be expressed globally, in terms of the standard spatial coordinates in $\BR^n$. It turns out that a  convenient way to carry out this program is to employ the so-called G\"unter derivatives  (for more details see \cite{DMM} and \cite{Gun})
\beqn
\mathcal{D}:=(\mathcal{D}_1, \mathcal{D}_2, ..., \mathcal{D}_n), \label{c1.3*}
\eeqn
where for each $1 \leq j \leq n$, the first-order differential operator $\mathcal{D}_j$ is the directional derivative along $\psi e_j$, where $\psi: \BR^n \rightarrow T_x\Omega$ is the orthogonal projection onto the tangent plane to $\Omega$ and, as usual, $e_j = (\delta_{j,k})_{1 \leq k \leq n} \in \BR^n$, with $\delta_{jk}$ denoting the Kronecker symbol. The operator $\mathcal{D}$ is globally defined on $\Omega$ by means of the unit normal vector field, and has a relatively simple structure. In terms of (\ref{c1.3*}), the Laplace operator defined via G\"unter derivatives, namely the G\"unter-Laplacian,  becomes
\beqn
\Delta_G = \mathcal{D}^2 = \sum_{j=1}^n \mathcal{D}_j^2 = \sum_{j=1}^n (\partial_{x_j} - \nu \partial_{\nu})(\partial_{x_j} - \nu \partial_\nu), \nonumber
\eeqn
with $\nu(x) := \frac{x}{||x||},$ $x  \in \BR^n \setminus \{ 0 \}$, and where $\partial_\nu = \sum_{j=1}^n \left( \frac{x_j}{||x||} \right) \partial_{x_j}$ is the radial derivative in $\BR^n$. For the Laplace operator introduced in Subsection \ref{sec2.1} and $\Delta_G$. We have the following identity
\beqn
\Delta = \psi \mathcal{D}^2 + 2R^2 - \mathcal{G}R, \label{c1.4*}
\eeqn
where $R(x)= \nabla \nu (x)$ and $\mathcal{G} = \mathrm{div} \nu$. Relations (\ref{c1.3*}) and (\ref{c1.4*}) are proved in \cite{DMM}.


\section{The regularized Schr\"odinger semigroup acting on vector bundles}

In this section the main objective is to construct the regularized semigroup associated to our operator, namely $\{ \Gamma_t^\K \}_{t \in \BR_0^+}$,  and to show that, under specific values of $p$, we can use it to obtain a unique solution of the regularized  Schr\"odinger equation in $L_p$. The application to the solution of the equation will only be possible after we study the dissipativity of the elements of the semigroup.


\subsection{Semigroups associated to regularized Schr\"odinger operators}

The use of the semigroups techniques in the study of time-evolution equations has several advantages. For example, they provide an elegant alternative to establish existence results for evolution equations. Important connections between semigroup theory and the Schr\"odinger equation have already been established by a number of authors. In \cite{Zhao} for instance, the author constructed the associated semigroup via the infinitesimal generator without using any type of regularization procedure or the spectral theorem.

In this section we want to construct the semigroup associated to our evolution operator in a simplest possible way. This construction is based on some ideas presented in \cite{E} by Eichhorn. The main difference between his and our approach is that we cannot use the Schr\"odinger operator itself. This impossibility is due to the fact that our time-dependent operator is not hypoelliptic. Hence we will only be able to construct one semigroup for each element of the family of hypoelliptic operators $-\Delta - \K \partial_t$, where  $\K = \frac{\epsilon + i}{\epsilon^2 + 1}, \epsilon >0$.

Let us consider a space-time domain of the form $\Omega = \ul{\Omega} \times \BR^+ \subset \BR^{n+1}$. Suppose that $\Omega$ is an arbitrary open and complete manifold.

On open and complete manifolds, where completeness is meant with respect to the $L_2-$norm, the Laplacian is essentially self-adjoint on tensors fields with compact support. Applying the regularization procedure that has been described in Subsection \ref{subsec2.2}, we obtain, after using the spectral theorem, the following integral operator
\beqn
\Gamma_t^\K = \int_0^{+\infty} e^{-\frac{t \lambda}{\K}}~ dE_\lambda. \nonumber
\eeqn
For more details about the application of the spectral theorem to the Dirac operator in the context of Clifford analysis, see \cite{DSS}.
For each $\K$ and $t$ fixed the integral operator defined here is well defined in $L_2(\Omega)$. For $u \in L_2(\Omega)$ we have the following properties:
\begin{enumerate}
\item[\textbf{(i)}]
$(-\Delta - \K \partial_t) \Gamma_t^\K u = \Gamma_t^\K (-\Delta - \K \partial_t)u$; \vspace{0,25cm}

\item[\textbf{(ii)}]
the mapping $t \mapsto \Gamma_t^\K u$ is differentiable; \vspace{0,25cm}

\item[\textbf{(iii)}]
$\partial_t \Gamma_t^\K u = (-\Delta - \K \partial_t ) \Gamma_t^\K u$.
\end{enumerate}

These properties follow immediately from differential properties of semigroups and can be found in \cite{Eva} (Subsection 7.4.1).


\subsection{Dissipative property of the regularized operators} \label{subsec1}

Now we want to verify if the elements of the regularized semigroup $\{ \Gamma_t^\K \}_{t \in \BR_0^+}$ are dissipative. This property is very important because it will give us the possibility to obtain results that are essentially needed for solving of initial-value problems. To do that we first prove that, for each fixed  $\K$, the elements of the semigroup satisfy the conditions of Lemma \ref{lem1}, i.e,  $\mathrm{Range}(\mu - (-\Delta)) = X$, for some $\mu > 0$, or equivalent $\mathrm{Range}(\mu - \Delta) = X$, for some $\mu < 0$. First we need to consider the following auxiliar result:

\begin{lemma} \label{lem2}
Suppose that $u \in L_p (\Omega) + L_q(\Omega)$, i.e. $u=u_1+u_2$ with $u_1 \in L_p(\Omega)$, $u_2 \in L_q(\Omega)$ and $1 < p \leq q < 3$. If
\beqn
-\Delta u = \mu u,  \nonumber
\eeqn
for some  $\mu > 0$, then $u$ is identically zero.
\end{lemma}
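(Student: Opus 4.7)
The plan is to apply Fourier analysis to the equation $-\Delta u=\mu u$ and exploit the geometry of the sphere $|\xi|^2=\mu$. Since $L_p(\Omega)+L_q(\Omega)\subset \mathcal{S}'(\BR^n)$ for any $1<p\leq q<\infty$ (after the natural extension-by-zero of $u$ to all of $\BR^n$), the tempered distribution $\hat u$ is well defined. Taking the Fourier transform on both sides of $-\Delta u=\mu u$ and using $\widehat{-\Delta u}(\xi)=|\xi|^2\hat u(\xi)$ gives
\beqn
(|\xi|^2-\mu)\,\hat u(\xi)=0 \quad \text{in } \mathcal{S}'(\BR^n), \nonumber
\eeqn
whence $\mathrm{supp}(\hat u)$ is contained in the sphere $S_{\sqrt\mu}:=\{\xi\in\BR^n:|\xi|^2=\mu\}$, a compact set of Lebesgue measure zero.

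Next, write $u=u_1+u_2$ with $u_1\in L_p(\Omega)$ and $u_2\in L_q(\Omega)$. When $p\leq q\leq 2$, Hausdorff--Young yields $\hat{u}_1\in L_{p'}$ and $\hat{u}_2\in L_{q'}$, so that $\hat u\in L_{p'}+L_{q'}\subset L^1_{\mathrm{loc}}(\BR^n)$. Any locally integrable function supported on the Lebesgue-null set $S_{\sqrt\mu}$ vanishes almost everywhere, giving $\hat u=0$ and hence $u\equiv 0$. This already settles the low-exponent regime.

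The delicate case is $2<q<3$, where Hausdorff--Young no longer realizes $\hat{u}_2$ as a function and one must rely on a finer argument. Here I would invoke the Agmon--Kato--Kuroda Liouville-type theorem for the Helmholtz operator (equivalently, a duality argument via the Stein--Tomas restriction theorem), which states that in ambient dimension $n$ the equation $(-\Delta-\mu)u=0$ with $\mu>0$ admits no non-trivial solutions in $L_r(\BR^n)$ for $1<r<2n/(n-1)$. In the spatial dimension $n=3$ implicit in this paper, $2n/(n-1)=3$, which is precisely the upper bound appearing in the hypothesis $q<3$. Applied to each summand $u_i$ (or, after a standard truncation/bootstrap bringing $u$ itself into a single $L_r$), this forces $u\equiv 0$.

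The main obstacle is exactly this upper regime $2<q<3$: below the self-dual exponent $2$ the conclusion reduces to the triviality of the sphere as a Lebesgue-null set, whereas above it the argument requires the non-vanishing Gaussian curvature of $S_{\sqrt\mu}$ encoded in the Stein--Tomas restriction estimate, since $\hat u_2$ is only a tempered distribution and cannot be discarded by measure-theoretic reasoning alone.
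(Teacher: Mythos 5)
There is a genuine gap: your argument is tied to the flat full space $\BR^n$, while the lemma is stated (and used) for an arbitrary open \emph{complete} space--time manifold $\Omega=\ul{\Omega}\times\BR^+$, and its proof scheme is explicitly recycled later for the Bochner and G\"unter Laplacians on curved Minkowski manifolds. On such a domain the step $(|\xi|^2-\mu)\hat u=0$ is simply unavailable: extension of $u$ by zero does not satisfy $-\Delta\tilde u=\mu\tilde u$ in $\mathcal{S}'(\BR^n)$ unless $\Omega$ is all of $\BR^n$ (distributional boundary terms appear), and there is no Fourier transform diagonalizing $\Delta$ on a general manifold. Indeed, since no boundary condition is imposed, what makes the statement true is the completeness of $\Omega$ together with the global integrability hypothesis --- on a bounded domain the claim would be false (Dirichlet eigenfunctions) --- so any correct proof must exploit completeness, which your frequency-side argument never does. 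Two further points would fail even on $\BR^n$: (i) the threshold $3$ in the lemma is not $2n/(n-1)$ evaluated at $n=3$; the paper works in arbitrary dimension $n$, and the bound $1<p\leq q<3$ comes from the requirement $|p-2|<1$, $|q-2|<1$ (so that $|t\,h_1'(t)|\leq\eta h_1(t)$ with $\eta<1$) in a weighted energy estimate. For $n\geq 4$ one has $2n/(n-1)<3$, so the Agmon--Kato--Kuroda/Stein--Tomas range you invoke would not cover the whole interval $q<3$. (ii) Applying that Liouville theorem ``to each summand $u_i$'' is not legitimate, because $u_1$ and $u_2$ individually do not solve the Helmholtz equation; only their sum does, and the truncation/bootstrap you allude to is precisely the nontrivial part.

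For comparison, the paper's proof is a Caccioppoli-type integration-by-parts argument in the spirit of Yau and Strichartz: one takes Lipschitz cutoffs $\phi_{r,s}$ with $\|D\phi_{r,s}\|_\infty\leq c/(s-r)$ and a weight $h_1(|u|)$ behaving like $|u|^{p-2}$ for $|u|\geq 1$ and like $|u|^{q-2}$ for $|u|$ small, pairs $\phi^2h_1(|u|)u$ against $-\Delta u=\mu u$, and uses $\mu>0$ plus Cauchy--Schwarz to get $\int\phi^2h_1(|u|)|Du|^2\leq C(1-\eta)^{-2}\|D\phi\|_\infty^2\int_{\mathrm{supp}\,\phi}h_1(|u|)|u|^2$. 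The right-hand side stays bounded because $h(|u|)|u|^2$ interpolates between $|u|^p$ and $|u|^q$ and is therefore integrable under the hypothesis $u\in L_p+L_q$; letting $s\to\infty$ (completeness) forces $Du=0$, hence $\Delta u=0$ and then $u=-\mu^{-1}\Delta u=0$. This is where the restriction $1<p\leq q<3$ genuinely originates, and it is dimension-free and metric-free, which is exactly what the later sections need.
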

\begin{proof}
In order to prove this statement we now introduce three auxiliar functions. For $x_0 \in \Omega$ and arbitrary $0<r<s$ we can construct a Lipschitz continuous and almost everywhere differentiable function $\phi_{r,s}$ with the following properties
\begin{enumerate}
\item[\textbf{(i)}] $\ds 0 \leq \phi_{r,s} \leq 1$; \vspace{0,25cm}

\item[\textbf{(ii)}] $\ds \mathrm{supp} \phi_{r,s} \subseteq B_s(x_0) = \{ x \in \Omega : ||x-x_0||_\Omega <s  \}$; \vspace{0,25cm}

\item[\textbf{(iii)}] $\ds \phi_{r,s} = 1,$ on $B_r(x_0)$; \vspace{0,25cm}

\item[\textbf{(iv)}] $\ds \lim_{r,s \rightarrow +\infty} \phi_{r,s} = 1$; \vspace{0,25cm}

\item[\textbf{(v)}] $\ds \left| d \phi_{r,s} (x) \right| = \left| D \phi_{r,s} (x) \right| \leq \frac{c}{s-r}$ almost everywhere. \vspace{0,25cm}
\end{enumerate}

Since $\phi_{r,s}$ is a scalar function, property \textbf{(v)} is a direct consequence of the properties of the Laplace operators acting on a differential form in this more general setting.

We define the following auxiliar function, denoted by $h_1$, as follows
\beqn
h_1(t)=\left \{ \begin{array}{ll}
\ds t^{p-2} &\mbox{ if } t \geq 1 \\
\ds (\gamma + t^2)^{\frac{q-2}{2}} &\mbox{ if } t < 1 - \gamma
\end{array} \right., \nonumber
\eeqn
with $0 < \gamma < 1$ small enough. Hence, for $1<p \leq q < 3$ we have
\beqn
th_1'(t)= \left\{ \begin{array}{ll}
\ds (p-2) t^{p-2} &\mbox{ if } t \geq 1 \\
\ds (q-2)t^2 (\gamma + t^2)^{\frac{q-2}{2}-1} &\mbox{ if } t < 1- \gamma
\end{array} \right., \nonumber
\eeqn
which proves
\beqn
|t h_1'(t)| \leq \eta h_1(t), \label{1*}
\eeqn
for all $t \notin ]1-\gamma,1[$ and some $0 < \eta < 1.$

Since $h_1$ acts outside the interval $]1-\gamma,1[$ we need to consider a third auxiliar function $h_2,$ acting on the interval such that inequality (\ref{1*}) holds  also for $h_2$. These auxiliar functions $h_1$ and $h_2$ are necessary in order to give some control over the regularity of $\phi_{r,s}.$

After these preliminary observations we the proof.  Take an arbitrary element $\phi$ from the family of $\{ \phi_{r,s}\}$. Let us consider the term $\ds \langle  \phi^2 h_1(|u|)u,u\rangle $. For $\mu > 0$ we then have
\beqn
-\mu \langle  \phi^2 h_1(|u|)u,u\rangle  & = & \langle \phi^2 h_1(|u|)u,-\mu u\rangle \nonumber \\& = &  \langle \phi^2 h_1(|u|)u, \Delta u)\rangle \nonumber \\
&  = & \langle \phi^2 h_1(|u|)u, -DD u\rangle  \nonumber \\
& = & \langle D (\phi^2 h_1(|u|)u), D u\rangle \nonumber
\eeqn

Applying the chain rule and the Leibniz rule, the last expression then turns out to be equal to
\beqn
 \langle \phi^2 h_1(|u|)~Du, Du \rangle  + \langle \phi^2 h_1'(|u|)(uDu), Du \rangle + 2\langle \phi h_1(|u|)~uD\phi, D u \rangle, \nonumber
\eeqn
where $Du,~D\phi$ are vectorial expressions. Hence, we get
\begin{align}
& -\mu \langle  \phi^2 h_1(|u|)u,u\rangle =  \nonumber \\
&\langle \phi^2 h_1(|u|)~Du, Du \rangle  + \langle \phi^2 h_1'(|u|)(uDu), Du \rangle + 2\langle \phi h_1(|u|)~uD\phi, D u \rangle.\label{6*}
\end{align}

Since $\mu > 0$ we have that $\ds -\mu \langle  \phi^2 h_1(|u|)u,u\rangle < 0$. Consequently,
\beqn
\langle \phi^2 h_1(|u|)~Du, Du \rangle  + \langle \phi^2 h_1'(|u|)(uDu), Du \rangle & \leq &  - 2\langle \phi h_1(|u|)~uD\phi, D u \rangle \nonumber \\
& \leq & \left| - 2\langle \phi h_1(|u|)~uD\phi, D u \rangle \right| \nonumber \\
\label{AB} & \leq & 2\left| \langle \phi h_1(|u|)~uD\phi, D u \rangle \right|,
\eeqn

By $|t h_1'(t)| \leq \eta h(t)$, it follows that
\beqn
-\eta \langle \phi^2 h_1(|u|) Du,Du\rangle  & \leq & - \langle \phi^2 |u| h_1'(|u|)~ Du,Du\rangle  \nonumber \\
& = & - 2^n \int_\Omega \phi^2 h_1'(|u|)~|u|~|Du|^2~dx~dt \label{3*}
\eeqn From this property we get
\beqn
0 \leq (1-\eta)\langle \phi^2 h_1(|u|)~Du,Du\rangle \leq \langle \phi^2 h_1(|u|)~Du,Du\rangle + \langle \phi^2 h_1'(|u|)~|u|~|Du|,  |Du| \rangle \nonumber
\eeqn

The above established estimate together with (\ref{AB}) implies that
\begin{eqnarray}
0 \leq (1-\eta)\langle \phi^2 h_1(|u|)~Du,Du\rangle \leq 2\left| \langle \phi h_1(|u|)~uD\phi, D u \rangle \right| \label{4*}
\end{eqnarray}

Applying Schwarz's inequality to the right-hand side of (\ref{4*}) leads to
\begin{align*}
&  2 |\langle \phi h_1(|u|)~uD\phi, D u \rangle | \nonumber \\
& \leq 2 \left( 2^n \int_\Omega \phi^2 h_1(|u|)~|Du|^2~|D\phi|^2~dx~dt \right)^{\frac{1}{2}} \left( 2^n \int_{\Omega} \phi^2 h_1(|u|)~|u|^2~dx~dt \right)^{\frac{1}{2}} \nonumber \\
& \leq 2 ||D\phi||_\infty \left( 2^n \int_\Omega \phi^2 h_1(|u|)~|Du|^2~dx~dt \right)^{\frac{1}{2}} \left( 2^n \int_{\mathrm{supp} \phi} h_1(|u|)~|u|^2~dx~dt \right)^{\frac{1}{2}}, \nonumber
\end{align*}
where $\ds ||D\phi||_\infty = \sup_{x \in \mathrm{supp}\phi} |(D\phi)(x)|.$

Hence, from (\ref{4*}) we may further conclude that
{\small
\begin{align}
&(1 - \eta) ~ 2^n \int_\Omega \phi^2 h_1(|u|)~|Du|^2~dx~dt \nonumber \\
&\leq 2 ~ ||D\phi||_\infty \left( 2^n \int_\Omega \phi^2 h_1(|u|)~|Du|^2~dx~dt \right)^{\frac{1}{2}} \left( 2^n \int_{\mathrm{supp} \phi} h_1(|u|)~|u|^2~dx~dt \right)^{\frac{1}{2}} \label{12*}
\end{align}}

Squaring both sides of (\ref{12*}) and dividing them after that by $\ds (1-\eta)^2 ~ 2^n  \int_\Omega \phi^2 h_1(|u|)~|Du|^2~dx~dt$ leads to
{\small\beqn
\int_\Omega \phi^2h_1(|u|)~|Du|^2 dx~dt \leq 4 (1-\eta)^{-2} ~ ||D \phi||_\infty^2 \int_{\mathrm{supp} \phi} h_1(|u|)|u|^2~dx~dt. \label{5*}
\eeqn}

For $\gamma \rightarrow 0^+,$ the expression $ h_1(|u|) |u|^2$ converges to
\beqn
h(|u|)~|u|^2 =
\left \{ \begin{array}{ll}
|u|^p & \mbox{ if } |u| \geq 1 \\
|u|^q & \mbox{ if } |u| < 1
\end{array} \right. \nonumber
\eeqn
This expression is globally integrable whenever
\beqn
u \in L_p(\Omega) + L_q(\Omega). \nonumber
\eeqn

Now it remains to prove that under these conditions $u \equiv 0$. If $s \rightarrow +\infty$, then  $\mathrm{supp}\phi \rightarrow \Omega$. Hence, for the right-hand side of (\ref{5*}) we obtain
\beqn
\lim_{s \rightarrow +\infty}  \lim_{\gamma \rightarrow 0^+}\int_{\mathrm{supp} \phi} h_1(|u|)|u|^2~dx~dt = \int_\Omega h(|u|)~|u|^2~dx~dt. \nonumber
\eeqn
This limit is finite as a consequence of the preceding considerations.

Finally, by property \textbf{(v)} $\phi$, $||D \phi||_\infty \rightarrow  0$, if $s$  tends to infinity. Hence,
\beqn
\int_\Omega h(|u|)~|Du|^2~dx~dt = 0, \nonumber
\eeqn
i.e., $Du=0$. This fact implies that $-\Delta u = 0$. Finally, we arrive at $ u=\mu^{-1} \Delta u =0.$
\end{proof}

Under these conditions we immediately obtain the main result of this subsection.
\begin{lemma}\label{lem3}
$-\Delta$ is dissipative on $L_p^0(\Omega),$ for $1 < p < 3.$
\end{lemma}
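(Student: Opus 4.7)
The plan is to derive Lemma~\ref{lem3} directly from the identities already assembled in the proof of Lemma~\ref{lem2}, exploiting the fact that every $u \in L_p^0(\Omega)$ has compact support and therefore makes the cutoff $\phi_{r,s}$ essentially trivial. For such a $u$ in the domain of $-\Delta$, I want to exhibit a normalized tangent functional $u^\ast$ at $u$ satisfying $\langle u^\ast, -\Delta u\rangle \leq 0$. With respect to the $\BC_n$-valued pairing $\langle v, w\rangle = 2^n \int [v\ov{w}]_0 \, dx$ from Subsection~\ref{sec2.1}, the natural candidate (up to a normalizing constant) is $|u|^{p-2} u$, regularized near $u=0$.

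The key observation is that the auxiliary weight $h_1(|u|)$ from Lemma~\ref{lem2}, specialized to $q=p$, is exactly such a regularization: it equals $|u|^{p-2}$ for $|u|\geq 1$ and $(\gamma+|u|^2)^{(p-2)/2}$ for $|u|<1-\gamma$, and converges pointwise to $|u|^{p-2}$ as $\gamma \to 0^+$. Choosing $\phi$ from the family $\{\phi_{r,s}\}$ with $\phi\equiv 1$ on a neighbourhood of $\mathrm{supp}\,u$ (permissible because the support is compact), the chain of identities leading to (\ref{6*}) remains valid, and the cutoff-dependent term $2\langle \phi h_1(|u|)uD\phi, Du\rangle$ vanishes since $D\phi$ is supported away from $\mathrm{supp}\,u$. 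Consequently,
\[
\langle h_1(|u|) u, \Delta u\rangle = \langle h_1(|u|) Du, Du\rangle + \langle h_1'(|u|)(uDu), Du\rangle.
\]

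Applying the weight bound $|t h_1'(t)| \leq \eta h_1(t)$ with $0<\eta<1$ from (\ref{1*}) to the second term, as in the derivation of (\ref{3*}), yields
\[
\langle h_1(|u|) u, -\Delta u\rangle \leq -(1-\eta)\langle h_1(|u|) Du, Du\rangle \leq 0.
\]
Letting $\gamma \to 0^+$ and multiplying by the normalizing constant $\|u\|_{L_p}^{2-p}$ converts $h_1(|u|)u$ into $u^\ast := \|u\|_{L_p}^{2-p}|u|^{p-2} u$, which is a normalized tangent functional at $u$ in the $\BC_n$-valued $L_p$-setting; the inequality above is preserved in the limit, giving dissipativity. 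The restriction $1<p<3$ enters precisely at (\ref{1*}): to obtain some $\eta<1$ with $|th_1'(t)|\leq \eta h_1(t)$ requires $|p-2|<1$, i.e.\ $1<p<3$.

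The main technical obstacle is twofold: first, verifying that $\|u\|_{L_p}^{2-p}|u|^{p-2} u$ genuinely qualifies as a normalized tangent functional under the Clifford pairing (one needs $\langle u^\ast, u\rangle = \|u\|_{L_p}^2$ and $\|u^\ast\|_{L_q} = \|u\|_{L_p}$ with $q=p/(p-1)$, which follow from direct computation using the definition of $|\cdot|^2$ in Subsection~\ref{sec2.1}); and second, justifying the exchange of the limit $\gamma\to 0^+$ with the pairing, which will follow by dominated convergence using $h_1(|u|)|u|^2$ as the integrable majorant, exactly the integrability property recorded near the end of the proof of Lemma~\ref{lem2}. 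Everything else is a direct specialization of the computations already performed there.
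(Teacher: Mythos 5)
Your proof is correct and follows essentially the same route as the paper: the paper likewise pairs $-\Delta u$ with the tangent functional $|u|^{p-2}u$, integrates by parts to obtain $\langle |u|^{p-2}Du,Du\rangle + (p-2)\langle |u|^{p-3}(uDu),Du\rangle$, and bounds the second term by the first using $|p-2|<1$. The only difference is that the paper works directly with the unnormalized weight $|u|^{p-2}$ and skips the $\gamma$-regularization via $h_1$ and the cutoff discussion, which your compact-support observation and dominated-convergence step simply make more careful.
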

\begin{proof}
If $u \in L_p^0(\Omega)$, then
\beqn
\langle |u|^{p-2} u, -\Delta u \rangle & = & \langle D(|u|^{p-2}u),Du \rangle \nonumber \\
& = & \langle |u|^{p-2}~Du,Du \rangle + (p-2)\langle |u|^{p-3} (uDu), Du \rangle. \nonumber
\eeqn

We have
\beqn
0 \leq \left| \langle |u|^{p-3} (uDu), Du \rangle\right| & \leq & 2^n \int_\Omega |u|^{p-3}~|u|~|Du|~|Du|~dx~dt \nonumber \\
& = & \langle |u|^{p-2}~Du, Du \rangle, \nonumber
\eeqn
i.e., with $|p-2|<1$
\beqn
\langle |u|^{p-2}u, -\Delta u \rangle \leq 0. \nonumber
\eeqn
\end{proof}


\subsection{Main result} \label{subsec2}

The aim of this subsection is to determine for which values of $p$ the property $u \in L_p(\Omega)$ implies the uniqueness of the associated semigroup $\{ \Gamma_t^\K \}_{t \in \BR_0^+}$ and that $\Gamma_t^\K u$ is a solution of the regularized Schr\"odinger equation.

\begin{theorem} \label{teo1}
Let $\{ \Gamma_t^\K \}_{t \in \BR_0^+}$ be the regularized Schr\"odinger semigroup acting on $L_2(\Omega)$. Then  $\ds ||\Gamma_t^\K u||_p \leq ||u||_p,$ for all $u \in L_p(\Omega) \cap L_2(\Omega)$ and $\ds \frac{3}{2} < p < 3.$

Therefore, $\{\Gamma_t^\K \}_{t \in \BR_0^+}$ extends to a contraction semigroup on $L_p^0(\Omega)$ for $\ds \frac{3}{2} < p < 3$. Moreover, $\Gamma_t^\K u$ satisfies the regularized  Schr\"odinger equation
\begin{eqnarray*}
\K \partial_t (\Gamma_t^\K u) = - \Delta (\Gamma_t^\K u),
\end{eqnarray*}
for $u \in L_p(\Omega)$ and $\{ \Gamma_t^\K \}_{t \in \BR_0^+}$ is unique.
\end{theorem}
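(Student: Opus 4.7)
The plan is to apply the Lumer–Phillips characterization (Lemma \ref{lem1}) to the operator $-\Delta$ on $L_p^0(\Omega)$ and then identify the resulting Banach-space semigroup with the $L_2$-semigroup $\Gamma_t^\K$ via agreement on a dense common subspace. The required ingredients are: density of the domain (clear, since smooth compactly supported functions form a core); dissipativity on $L_p^0(\Omega)$, provided by Lemma \ref{lem3} for the full range $1 < p < 3$; and a range condition $\mathrm{Range}(\mu + \Delta) = L_p^0(\Omega)$ for some $\mu > 0$. The range restriction is where the sharper bound $\frac{3}{2} < p < 3$ enters.

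To verify the range condition I would argue by duality. The formal adjoint of $\mu + \Delta$ acting on $L_p^0(\Omega)$ is $\mu + \Delta$ acting on $L_{p'}^0(\Omega)$, where $p' = p/(p-1)$. Lemma \ref{lem2} tells us that any $u \in L_p(\Omega) + L_{p'}(\Omega)$ solving $-\Delta u = \mu u$ with $\mu > 0$ must vanish identically, so the adjoint has trivial kernel. Combined with the a priori estimate supplied by dissipativity (which makes the range of $\mu + \Delta$ closed), this upgrades to surjectivity onto $L_p^0(\Omega)$. The simultaneous use of Lemma \ref{lem2} for both $p$ and its conjugate exponent $p'$ forces both to lie in $(1,3)$, which is exactly the constraint $p \in (\frac{3}{2},3)$. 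This dual application is the step I expect to require the most care, since in the Clifford-valued setting one must keep track of the non-commutative pairing that realises the duality between $L_p$ and $L_{p'}$ and then invoke a closed-range-style argument adapted to that pairing.

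Once Lemma \ref{lem1} yields a $C^0$-contraction semigroup $\{ S_t^\K \}_{t \in \BR_0^+}$ on $L_p^0(\Omega)$ with infinitesimal generator $-\Delta/\K$, I would compare it with $\Gamma_t^\K$ on $L_p^0(\Omega) \cap L_2(\Omega)$. Since $1/\K = \epsilon - i$ and hence $\mathrm{Re}(1/\K) = \epsilon > 0$, the spectral representation immediately gives $|e^{-t\lambda/\K}| = e^{-\epsilon t \lambda} \leq 1$ for $\lambda \geq 0$, so $\Gamma_t^\K$ is an $L_2$-contraction. Both semigroups solve the same abstract Cauchy problem on the dense common core of smooth compactly supported functions, so they must agree on $L_p^0(\Omega) \cap L_2(\Omega)$. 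Since this intersection is dense in $L_p(\Omega) \cap L_2(\Omega)$ in the $L_p$-norm, the bound $\|\Gamma_t^\K u\|_p \leq \|u\|_p$ extends to every $u \in L_p(\Omega) \cap L_2(\Omega)$, proving the first assertion.

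The PDE $\K\,\partial_t \Gamma_t^\K u = -\Delta\, \Gamma_t^\K u$ is then the infinitesimal generator equation for the extended semigroup: differentiating at $t=0$ on a core gives it at $t=0$, and the semigroup property propagates it to all $t > 0$, so properties \textbf{(i)}–\textbf{(iii)} listed earlier in the section transfer to the $L_p^0$ extension. Uniqueness of $\{ \Gamma_t^\K \}_{t \in \BR_0^+}$ is then a direct consequence of Lumer–Phillips, since two $C^0$-contraction semigroups with the same generator necessarily coincide, and the regularised Schr\"odinger equation together with the initial datum pins down the generator to be $-\Delta/\K$ on its natural $L_p^0$-domain.
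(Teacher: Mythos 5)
Your first half tracks the paper's own proof closely: dissipativity of $-\Delta$ on $L_p^0(\Omega)$ from Lemma \ref{lem3} for $1<p<3$; the range condition obtained by duality (a nonzero annihilator $u\in L_{p'}(\Omega)$ of $\mathrm{Range}(\mu-(-\Delta))$ would satisfy $\Delta u=-\mu u$ with $\mu>0$, contradicting Lemma \ref{lem2}); and the resulting restriction $p'<3$, i.e.\ $p>\frac{3}{2}$, combined with $p<3$ from dissipativity. This is exactly how the paper produces a contraction semigroup on $L_p$ via Lemma \ref{lem1}, so up to that point the proposal is sound and essentially identical in route.

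The gap lies in the identification and uniqueness steps, where you replace the paper's argument by assertions that do not hold at this level of generality. You claim that because the $L_p$-semigroup and $\Gamma_t^{\K}$ ``solve the same abstract Cauchy problem on a common core'' they must agree on $L_2\cap L_p$. Agreement of two generators on a common dense subspace does not force the semigroups to coincide unless that subspace is a core for \emph{both} generators; this is precisely the kind of $L_p$-uniqueness statement that has to be proved here, not assumed. The paper's mechanism is a resolvent comparison: for $u\in L_2\cap L_p$ one sets $v=(\mu-(-\Delta))^{-1}u\in L_2$ and $w=(\mu-A)^{-1}u\in L_p$, observes that $v-w\in L_2+L_p$ satisfies $\Delta(v-w)=-\mu(v-w)$, and invokes Lemma \ref{lem2}, whose hypothesis $u\in L_p(\Omega)+L_q(\Omega)$ exists exactly for this mixed-space situation. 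Your proposal only ever applies Lemma \ref{lem2} to a single $L_{p'}$ function, so the sum-of-two-spaces hypothesis is never used --- a symptom that the identification step is missing. The same objection applies to uniqueness: saying the PDE ``pins down the generator to be $-\Delta/\K$ on its natural domain'' begs the question, since a competing generator $A'$ could a priori have a different domain; the paper again reduces to $\Delta(v-w)=-\mu(v-w)$ with $v-w$ in a sum of Lebesgue spaces and applies Lemma \ref{lem2} a third time. (A smaller issue, which the paper shares: Lumer--Phillips is applied to the real operator $-\Delta$, while the semigroup to be matched has the complex-scaled generator; your computation $\mathrm{Re}(1/\K)=\epsilon>0$ handles the $L_2$ contraction but not dissipativity of the scaled operator in $L_p$.)
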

\begin{proof}
The closure $A$ of $\ds -\Delta|_{L_p^0(\Omega)}$ in $L_p(\Omega)$ is dissipative for $1<p<3$.

Furthermore, $\mu - A$ is surjective for $\mu>0$ and for $1<p<3$. In fact, if this was wrong, then there would exist a $u \in L_{p'}(\Omega)$ such that $\ds \langle u, (\mu - A ) v \rangle = 0,$ for all $v \in L_p^0(\Omega).$ This would imply $\Delta u = -\mu u,$ for $\mu > 0,$ establishing a contradiction to Lemma~\ref{lem1}.

From $p' < 3$ we get the restriction $\ds p > \frac{3}{2}$. Hence,  $A$ generates a contraction semigroup $\ds \{ Q_t \}_{t \in \BR_0^+}$ for $\ds \frac{3}{2} < p < 3.$

Next, we show that the semigroups $Q_t$ and $\Gamma_t^\K$ agree on
\beqn
L_2 \cap L_p = L_2(\Omega) \cap L_p(\Omega). \nonumber
\eeqn

For this it is sufficient to show that $(\mu - (- \Delta))^{-1}$ and $(\mu - A)^{-1}$ coincide on $L_2 \cap L_p$. Suppose that $u \in L_2 \cap L_p$, $(\mu - (-\Delta))^{-1}u=v,$ $(\mu - A)^{-1}u = w.$ Then $v \in L_2,$ $w \in L_p,$ $v-w \in L_2 + L_p$ and $\Delta(v-w)=-\mu(v-w),$ $\mu>0.$ According to Lemma \ref{lem2}, we have $v=w,$ $\ds \{ Q_t \}_{t \in \BR_0^+} = \{\Gamma_t^\K  \}_{t \in \BR_0^+}$ on $L_2 \cap L_p.$

This proves the estimate $\ds ||\Gamma_t^\K u||_p \leq ||u||_p$, for $\ds \frac{3}{2} < p < 3.$

Since $\Gamma_t^\K u$ satisfies the regularized  Schr\"odinger equation for $u \in D_\Delta$ and since this domain is dense in $\ds L_p(\Omega),$ $\Gamma_t^\K u$ also satisfies the regularized   Schr\"odinger equation, but at the first instance only in distributional sense. The hypoellipticity of the regularized  Schr\"odinger operator implies this property in the pointwise sense only.

Now, we prove the uniqueness. Suppose that $A'$ is the infinitesimal generator of another contraction semigroup $\ds \{ P_t\}_{t \in \BR_0^+}$, such that $P_t u$ satisfies the regularized  Schr\"odinger equation. Then we have to show $(\mu - A')^{-1} = (\mu - (-\Delta))^{-1}.$

We have $(\mu - A')^{-1} u = v$ which means $v \in D_A,$ and $(\mu - A')v = u.$ If $v \in D_A,$ then
$$\begin{array}{rcl}
\ds t^{-1}(P_t v - v) & \rightarrow & L'v \in L_p(\Omega), \\
& & \\
\ds t^{-1}(P_{s+t} v - P_s v) & \rightarrow  & P_s A'v \in L_p(\Omega),
\end{array}$$
for any fixed $s>0$. $P_t u$ satisfies the regularized   Schr\"odinger equation. Therefore,
\beqn
t^{-1}(P_{s+t}v - P_s v) \rightarrow \partial_s P_s v = -\Delta P_s v, \nonumber
\eeqn
i.e., $P_s A' v = -\Delta P_s v$. Then
\beqn
A' v = \lim_{s \rightarrow 0} (-\Delta P_s v) = -\Delta v \nonumber
\eeqn
in the distributional sense. It follows that $v \in L_p(\Omega)$ satisfies $(\mu - (-\Delta))v=u.$ On the other hand, if $(\mu - (-\Delta))^{-1} u = w$, then $w \in L_p(\Omega)$ and
$$\begin{array}{rcl}
(\mu - (-\Delta)) w & = &  u, \\
& & \\
\Delta (v-w) & = & - \mu (v-w), \quad \mu>0.
\end{array}$$

According to Lemma \ref{lem2} we may conclude that $v=w$. This establishes our result.
\end{proof}


\subsection{The flat oriented torus case}

In this subsection and the following one we want to give for some very special examples of manifolds explicit analytic representation formulas for the fundamental solution to the Schr\"odinger operator.

In this subsection we present some explicit formulas for the solutions to the Schr\"odinger equation on conformally flat $n$-tori (and conformally flat $k$-cylinders). Conformally flat means that these manifolds have a vanishing Weyl tensor. This property is equivalent to the fact that the manifold possesses an atlas whose transition functions are conformal maps in the sense of Gauss (which are holomorphic functions in dimension $n=2$ and M\"obius transformations for $n \ge 3$). So, in the case $n=2$ the set of conformally flat manifolds coincides with the set of holomorphic Riemann surfaces.
\par\medskip\par

As is well known, we obtain conformally flat $n$-tori by forming the quotient of $\mathbb{R}^n$ with an $n$-dimensional torsion free lattice
$$
\Omega_n:= \mathbb{Z} v_1 + \cdots + \mathbb{Z} v_n
$$
where the elements $v_i$ $(i=1,...n)$ are chosen in a way that they are $\mathbb{R}$-linearly independent vectors from $\mathbb{R}^n$. Each element of the lattice $\Omega_n$ then can be written in the form
$$
v = m_1 v_1 + \cdots + m_n v_n
$$
with integers $m_1,...,m_n \in \mathbb{Z}$.
\par\medskip\par
Now let $U \subset \mathbb{R}^n$ be an open set.
A function $f:U \times \mathbb{R}^+ \to \BC_n$ that satisfies $f(x+v,t) = f(x,t)$ for all $v \in \Omega_n$ then naturally descends to the $n$-dimensional torus $T_n(v_1,...,v_n) := \mathbb{R}^n/\Omega_n$ by forming $f':=p(f)$, where $p: \mathbb{R}^n \to T_n$, $x \mapsto x \mod \Omega_n$ is the canonical projection from the spatial part $\mathbb{R}^n$ down to the manifold $T_n(v_1,...,v_n)$. For the sake of simplicity we shall write $T_n$ instead of $T_n(v_1,...,v_n)$ when it is clear which basis vectors $v_1,...,v_n$ are considered. Notice that this projection $p$ leaves the time variable $t$ invariant; it only acts on the spatial variables.
\par\medskip\par
Following \cite{Kuiper} and others, the manifolds $T_n$ are actually all conformally flat.
Next, following e.g. \cite{KraRyan2}, the decomposition of the lattice $\Omega_n$ into the direct sum of the sublattices $\Omega_l:= \mathbb{Z} v_1 + \cdots + \mathbb{Z} v_l$ and $\Omega_{n-l}:= \mathbb{Z} v_{l+1} + \cdots + \mathbb{Z} v_n$ gives rise to conformally inequivalent different spinor bundles, denoted by $E^{(q)}$, on $T_n$ by making the identification $(x,X) \Longrightarrow (x+\underline{m}+\underline{n}),(-1)^{m_1+\cdots+m_l}X)$ with $x \in \mathbb{R}^n,X \in \BC_n$. Since $T_n$ is orientable, we are dealing with examples of spin manifolds in this context here.
\par\medskip\par
Notice that the different spin structures on a spin manifold $M$ are detected by the number of distinct homomorphisms from the fundamental group $\Pi_{1}(M)$ to the group ${\mathbb{Z}}_{2}$. In the case of the $n$-torus we have that $\Pi_{1}(T_n)={\mathbb{Z}}^{n}$. There are two homomorphisms of $\mathbb{Z}$ to $\mathbb{Z}_{2}$. The first is $\theta_{1}:{\mathbb{Z}}\rightarrow {\mathbb{Z}}_{2}:\theta_{1}(n) \equiv 0 \mod 2$ while the second is the homomorphism $\theta_{2}:{\mathbb{Z}}\rightarrow {\mathbb{Z}}_{2}:\theta_{2}(n) \equiv 1 \mod 2$. Consequently, there are $2^{n}$ distinct spin structures on $T_{n}$. $T_{n}$ is also an example of a Bieberbach manifold. Further details of spin structures on the $n$-torus and other Bieberbach manifolds can be found in \cite{f,mp,pf}.
\par\medskip\par
By applying the projection map $p$ to the regularized Schr\"odinger operator $(\Delta-{\bf k} \partial_t)$, we induce a second order operator $(\Delta'-{\bf k} \partial_t)$ on the spin manifolds $T_n \times \mathbb{R}^+$, which then is the regularized Schr\"odinger operator on this spin manifold.
\par\medskip\par
To construct the fundamental solution of the associated toroidal Schr\"odinger operator we periodize the fundamental solution
$$
e^{\epsilon}_- (x,t) := (\epsilon+i)\frac{H(t)}{(4 \pi(\epsilon+i)t)^{n/2}} \exp\left(-\frac{(\epsilon+i)|x|^2}{4(\epsilon^2+1)t}\right), \quad \epsilon> 0
$$
of the hypoelliptic operator $(\Delta-{\bf k} \partial_t)$ over the period lattice. More precisely, this is achieved by forming the sum
$$
\wp^{\epsilon}_{q}(x,t) := \sum_{\underline{m} \in \Omega_l} \sum\limits_{\underline{n} \in \Omega_{n-l}}(-1)^{m_1+\cdots+m_l} e^{\epsilon}_{-}(x+\underline{m}+\underline{n};t)
$$
in which we take care of the proper minus sign that appears in the construction of the particular spinor bundle $E^{(q)}$ that we consider. The normal convergence of this series in $\mathbb{R}^n\backslash \Omega_n$ has been proved previously in \cite{KV1} to which we refer the reader for the technical details. The projection $p(\wp^{\epsilon}_{q}(x,t))$ thus descends to a well-defined spinor section $P^{\epsilon}_q$ that is in the kernel of the toroidal Schr\"odinger operator acting on the chosen  spinor bundle $E^{(q)}$ of the conformally flat torus $T_n$. As one can easily verify, see again \cite{KV1} for details, these spinor sections then are the fundamental solutions to the associated regularized Schr\"odinger operator on these manifolds. This is because they serve as the Green's kernel to the toroidal Schr\"odinger operator reproducing all spinors in the kernel of this operator on these manifolds.
\par\medskip\par
We can say much more. We can also construct every spinorial solution to the Schr\"odinger operator on $T_n$ as an additive series over linear combinations of the section $P^{\epsilon}_q$ and its partial derivatives. One gets uniqueness up to an entire real-analytic function that only depends on the time variable $t$. More precisely, adapting from  \cite{KV1}, we can directly establish that
\begin{theorem}
Let $S \subset \mathbb{R}^{n} \times \mathbb{R}^+$ be a closed subset that has the property that $S+v = S$ for all $v \in \Omega_n$.
Let $a_1,\ldots,a_p \in \mathbb{R}^{n+1} \backslash S$ be a finite set of points that are incongruent modulo $V$. Suppose
that $u:T_n \times \mathbb{R}^+ \backslash \{a_1,...,a_p\} \mapsto E^{(q)}$ is a spinor section of the regularized Schr\"odinger operator acting on the spinor bundle of $E^{(q)}$  which has at most singularities at the points of $a_i$ of order $K_i$. Then there exist constants $b_1,...,b_p \in \BC_n$ and a real analytic function $\phi=\phi(t)$ such that
$$
u(x,t) = p\left( \sum_{i=1}^p \sum_{m=0}^{K_i-(n-1)} \sum_{m=m_1+...+m_n} \Big[\wp^{\epsilon}_{m_1,...,m_n;q}(x-a_i,t) b_i\Big] + \phi(t)\right),
$$
where $\wp^{\epsilon}_{m_1,...,m_n;q}(x-a_i,t) = \frac{\partial^{m_1+...+m_n}}{\partial x_1^{m_1} \cdots \partial x_n^{m_n}}\wp^{\epsilon}_{q}(x-a_i,t)$.
\end{theorem}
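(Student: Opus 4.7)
The plan is to adapt the Mittag--Leffler style construction of \cite{KV1} to the regularized Schr\"odinger setting: I would treat $\wp^{\epsilon}_{m_1,\dots,m_n;q}(x-a_i,t)$ and its spatial derivatives as elementary building blocks carrying a single prescribed singularity at $a_i$ modulo $\Omega_n$, combine them to match the local singular behavior of $u$ at each $a_i$, and then identify the smooth remainder with an additive function $\phi(t)$ via a Liouville--type argument on the compact base.

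First I would lift $u$ through the covering projection $p$ to an $\Omega_n$--twisted periodic spinor-valued function $\tilde u$ on $\mathbb{R}^n\times\BR^+ \setminus \{a_i+v:\, v\in\Omega_n,\ i=1,\dots,p\}$, respecting the sign twist built into $E^{(q)}$. Near each representative $a_i$, the hypoellipticity of $-\Delta-\K\partial_t$ together with the classification of its distributional solutions with an isolated singularity of order at most $K_i$ yields constants $b_i\in\BC_n$ and a local principal part
\begin{equation*}
S_i(x,t) \;=\; \sum_{m=0}^{K_i-(n-1)}\; \sum_{m_1+\cdots+m_n=m} \frac{\partial^{m_1+\cdots+m_n} e^{\epsilon}_-(x-a_i,t)}{\partial x_1^{m_1}\cdots\partial x_n^{m_n}}\, b_i ,
\end{equation*}
such that $\tilde u - S_i$ extends smoothly across $a_i$ as a local solution of $-\Delta-\K\partial_t$.

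Periodizing each $S_i$ over $\Omega_n$ with the spinor twist proper to $E^{(q)}$ produces, by the normal convergence argument used to define $\wp^{\epsilon}_q$ in \cite{KV1}, the expression $\sum \wp^{\epsilon}_{m_1,\dots,m_n;q}(x-a_i,t)\,b_i$; subtracting these global sums from $\tilde u$ leaves a function which descends through $p$ to a globally smooth spinor section $v$ of $-\Delta-\K\partial_t$ on $T_n\times\BR^+$. Since $T_n$ is compact, $v(\cdot,t) \in L_p(T_n,E^{(q)})$ for $\tfrac{3}{2}<p<3$, so by Theorem~\ref{teo1} the section $v$ evolves under the contraction semigroup $\Gamma_t^\K$; a Liouville--type argument (which for a non-trivial spin structure $q$ forces $v\equiv 0$, and for the trivial one leaves only sections constant in $x$) then shows that $v$ depends on $t$ alone, yielding the claimed additive term $\phi(t)$.

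The step I expect to be the main obstacle is the local structure theorem at each $a_i$: one must rigorously justify that a distributional solution of the hypoelliptic operator $-\Delta-\K\partial_t$ with an isolated singularity of order at most $K_i$ at a point is matched exactly, modulo a smooth remainder, by spatial derivatives of the Euclidean fundamental solution $e^{\epsilon}_-$ of total order at most $K_i-(n-1)$, all sharing a single $\BC_n$--valued weight $b_i$. This requires a parametrix analysis for $-\Delta-\K\partial_t$ closely modelled on the heat counterpart developed in \cite{KV1}; once this is in hand, the periodization step (handling convergence and sign twists of the spinor bundle) and the Liouville step (exploiting compactness of $T_n$ and the semigroup contractivity of Theorem~\ref{teo1}) proceed in a standard way.
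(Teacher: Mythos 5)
Your overall architecture (lift through $p$, classify the isolated singularity at each $a_i$, periodize the principal parts with the spinor twist, and dispose of the smooth remainder) is the natural Mittag--Leffler strategy, and it is presumably close to what \cite{KV1} does; the paper itself offers no proof at all --- it simply states that the result is ``adapted from \cite{KV1}'' --- so there is no argument in the text to match yours against. That said, two of your steps are genuinely gapped. First, the local structure theorem that you yourself flag as ``the main obstacle'' is not a technical detail to be deferred: it is the entire mathematical content of the theorem. Nothing in your sketch explains why an isolated space--time singularity of order $K_i$ of a null solution of the hypoelliptic operator is exhausted by spatial derivatives of $e^{\epsilon}_-$ of total order at most $K_i-(n-1)$ (where does the shift by $n-1$ come from?), nor why a \emph{single} Clifford constant $b_i$ suffices for all multi-indices at $a_i$ rather than one coefficient $b_{i,m_1,\dots,m_n}$ per derivative, as in the classical Laurent/Weierstrass picture. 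Without a parametrix or Almansi-type expansion for $-\Delta-\K\partial_t$ near a point of $\mathbb{R}^n\times\mathbb{R}^+$, the decomposition $\tilde u = S_i + (\text{smooth})$ is an assertion, not a proof.

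Second, the Liouville step as you describe it would fail. A smooth global null solution of $\Delta-\K\partial_t$ on $T_n\times\mathbb{R}^+$ with values in $E^{(q)}$ need not depend on $t$ alone: expanding in the (twisted) Fourier modes of the torus, every function of the form $e^{2\pi i\langle k+\beta_q,x\rangle}\,e^{-4\pi^2|k+\beta_q|^2 t/\K}$ is such a solution, is bounded for $t\ge 0$ (since $\mathrm{Re}(1/\K)=\epsilon>0$), and is precisely what the contraction semigroup $\Gamma_t^{\K}$ of Theorem~\ref{teo1} propagates --- so compactness of $T_n$ plus contractivity cannot exclude these modes, and for a nontrivial spin structure they are nonzero, contradicting your claim that $v\equiv 0$ in that case. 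Some additional hypothesis or input (decay as $t\to\infty$, a matching condition at $t=0^+$ inherited from the Heaviside factor in $e^{\epsilon}_-$, or whatever normalization \cite{KV1} actually imposes) is needed to force the remainder into the one-dimensional residual class $\phi(t)$; note also that a nonconstant $\phi(t)$ is not itself annihilated by $\Delta-\K\partial_t$, so the residual class has to be identified with more care than a bare Liouville argument provides.
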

By means of the section $P^{\epsilon}_q$ we can also obtain the fundamental to the original Schr\"odinger equation in the limit case $\epsilon \to 0$ on the torus $T_n$ with values in the spinor bundle $E^{(q)}$. In \cite{KV1} we have shown
\begin{theorem} Let $V' \subset T_n$ be a domain.
For all $1 \le p < +\infty$ we have the following weak convergence in $W_p^{-n/2-1}(V')$,
$$
\langle P^{\epsilon}_q,\phi \rangle \to \langle P_q,\phi\rangle,\; \phi \in W_p^{n/2+1}(V')
$$
when $\varepsilon \to 0^+$.
\end{theorem}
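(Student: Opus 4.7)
The plan is to reduce the claimed weak convergence on $T_n$ to a convergence statement for the lifted kernels on $\BR^n\times\BR^+$. Since the covering map $p$ is a local diffeomorphism, the pairing $\langle P^\epsilon_q,\phi\rangle$ for $\phi\in W_p^{n/2+1}(V')$ can be rewritten, via pull-back to a fundamental domain of $\Omega_n$ and via the defining series of $\wp^\epsilon_q$, as
$$
\langle P^\epsilon_q,\phi\rangle = \sum_{v\in\Omega_n} \sigma(v) \int_0^{+\infty}\!\!\int_{\BR^n} e^\epsilon_-(x+v,t)\,\widetilde\phi(x,t)\,dx\,dt,
$$
where $\sigma(v)=(-1)^{m_1+\cdots+m_l}$ encodes the spinor sign and $\widetilde\phi$ is a compactly supported lift of $\phi$ obtained through a cutoff adapted to the chosen fundamental domain. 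The theorem then splits into two ingredients: (a) termwise convergence of each summand as $\epsilon\to 0^+$ to its counterpart with $e^\epsilon_-$ replaced by $e_-$, and (b) an $\epsilon$-uniform absolute summability bound allowing dominated convergence on $\Omega_n$.

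For (a), I would exploit that $e^\epsilon_-(x,t)\to e_-(x,t)$ pointwise on $\BR^n\times\BR^+\setminus\{0\}$ as $\epsilon\to 0^+$. The pairing with $\widetilde\phi$ is interpreted distributionally: integrating by parts up to $n/2+1$ times in $x$ transfers the singular behavior of $e^\epsilon_-$ at $t\to 0^+$ onto derivatives of $\widetilde\phi$. Using $\widetilde\phi\in W_p^{n/2+1}$ together with Sobolev embedding to control those derivatives, one obtains an $\epsilon$-uniform locally integrable dominating function on $\BR^n\times\BR^+$, and dominated convergence yields the termwise limit for each fixed $v\in\Omega_n$.

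For (b), the key uniform estimate is of the form
$$
\left|\int_0^{+\infty}\!\!\int_{\BR^n} e^\epsilon_-(x+v,t)\,\widetilde\phi(x,t)\,dx\,dt\right| \leq \frac{C\,\|\widetilde\phi\|_{W_p^{n/2+1}}}{(1+|v|)^{n+1}}, \qquad \epsilon\in(0,1],
$$
which is summable on $\Omega_n$. I would prove it by repeatedly integrating by parts in $x$ against the factor $\exp(-\K|x+v|^2/(4t))$: each integration produces a factor whose modulus is bounded by $C\,t/|v|$ on the support of $\widetilde\phi$ for large $|v|$, and the $W_p^{n/2+1}$-regularity of $\widetilde\phi$ permits enough such integrations to overcome both the $t$-singularity and to yield lattice-summable decay in $|v|$. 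The main obstacle is precisely this uniform-in-$\epsilon$ bound: for every fixed $\epsilon>0$ the series $\wp^\epsilon_q$ converges absolutely thanks to the Gaussian damping $\mathrm{Re}(-\K)=-\epsilon/(\epsilon^2+1)<0$ exploited in \cite{KV1}, but that damping vanishes as $\epsilon\to 0^+$, so the estimate must be driven by oscillation of the imaginary part alone; this is why the Sobolev index $n/2+1$ on $\phi$ is the sharp value appearing in the statement. Once (a) and (b) are in place, dominated convergence on $\Omega_n$ and descending through $p$ to $V'\subset T_n$ yield $\langle P^\epsilon_q,\phi\rangle\to\langle P_q,\phi\rangle$ for all $\phi\in W_p^{n/2+1}(V')$, i.e.\ the claimed weak convergence in $W_p^{-n/2-1}(V')$ for every $1\le p<+\infty$.
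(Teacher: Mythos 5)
The paper itself contains no proof of this theorem: it is quoted verbatim from \cite{KV1}, so there is nothing in the text to compare your argument against line by line. On its own terms, your architecture is the natural one --- lift the pairing to $\mathbb{R}^n\times\mathbb{R}^+$ through the projection $p$, prove termwise convergence in $\epsilon$ for each lattice translate, and close the lattice sum with an $\epsilon$-uniform summable majorant. Step (a) is essentially the regularization result for the free kernel from \cite{CV} and \cite{KV1}, and attacking the $t\to 0^+$ singularity by integration by parts against the quadratic phase is the right idea there.

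The gap is in step (b), and it is quantitative, not cosmetic. Each integration by parts in $x$ against $\exp\left(-\K|x+v|^2/(4t)\right)$ gains one factor of modulus $\sim t/|v|$ on the (compact) support of $\widetilde\phi$, so $N=n/2+1$ integrations give at most $(t/|v|)^{n/2+1}$. Since at $\epsilon=0$ there is no Gaussian damping whatsoever --- indeed $|e_-(x+v,t)|=(4\pi t)^{-n/2}$ \emph{independently of} $v$ --- the power $t^{n/2}$ must be spent cancelling the prefactor $t^{-n/2}$, and what remains is decay no better than $|v|^{-(n/2+1)}$, which is not summable over an $n$-dimensional lattice for any $n\ge 2$. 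To reach your claimed $(1+|v|)^{-(n+1)}$ by integrations by parts in $x$ alone you would need roughly $3n/2$ derivatives of $\phi$, not $n/2+1$. The mechanism that actually makes $n/2+1$ the right index is integration by parts in $t$: since $\partial_t\left(-|x+v|^2/(4t)\right)=|x+v|^2/(4t^2)$, each $t$-integration gains a factor $\sim t^2/|v|^2$, so $n/2+1$ of them yield $t\,|v|^{-(n+2)}$, simultaneously integrable in $t$ and lattice-summable --- but this uses time-derivatives of the test function, which your purely spatial scheme never invokes and which the hypothesis $\phi\in W_p^{n/2+1}(V')$ with $V'\subset T_n$ must be read as supplying. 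A second, related gap: because $\sum_{v}e_-(x+v,t)$ is not absolutely convergent, the identification of the sum of the termwise limits with $\langle P_q,\phi\rangle$ is itself part of what has to be proved (it amounts to the definition of $P_q$ in \cite{KV1}), so even granting (b), dominated convergence over $\Omega_n$ does not by itself deliver the stated conclusion.
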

\begin{remark}
The toroidal case is a very special case in the more general context of this paper here. First of all, these tori have the property that they can be constructed by factoring $\mathbb{R}^n$ by a discrete Kleinian group under whose action the regularized Schr\"odinger operator is totally left invariant. Notice that (up to conjugation) only translation subgroups of the $SO(n)$ have the property that they leave the set of null solutions to the Schr\"odinger totally invariant. Furthermore, it is due to the discreteness of the group, that we can describe the solutions of the Schr\"odinger operator as discrete {\it additive series}. In the more general context discussed in the other parts of this paper we cannot expect the fundamental solutions to be expressible in terms of additive periodizations of the fundamental solution to the regularized Schr\"odinger operator in $\mathbb{R}^n \times \mathbb{R}^+$. Of course, we also obtain similar series representations in the context of conformally cylinders that are constructed by factoring $\mathbb{R}^n$ by a $k$-dimensional sublattice of $\Omega_k$ for $k=1,...,n-1$. In this case the fundamental solution is simply a subseries of $P^{\epsilon}_q$ in which one only sums over the lattice points that belong to the sublattice $\Omega_k$.
\end{remark}

\par\medskip\par

A further speciality of the torus case (and also the cylinder cases) is the orientability of this manifolds which makes $T_n$ to a spin manifold. In the following subsection we explain how we can adapt the formulas that we presented in this subsection to non-orientable counterparts of the manifolds considered here.

\subsection{A class of non-orientable conformally flat manifolds}

The oriented cylinder $C$ defined as the topological quotient $\mathbb{R}^2/\mathbb{Z}$ has a natural non-oriented counterpart, namely the M\"obius strip. Also the torus $T_2:=\mathbb{R}^2/\mathbb{Z}^2$ has such a counterpart, namely the Klein bottle. In both cases we can construct these manifolds by gluing the same vertices of the fundamental domain of the associated one-dimensional resp. two-dimensional translation group (that lead to the cylinder resp. torus) together, both with opposite orientation, which however destroys the orientability.
\par\medskip\par
In the $n$-dimensional setting we can construct a family of non-oriented analogues of these manifolds from the oriented $k$-cylinders defined by $C_k := \mathbb{R}^n/\Omega_k$ where $k \in \{1,...,n-1\}$ and where $\Omega_k \subset \mathbb{R}^k$ is a $k$ dimensional  lattice spanned by $k$ $\mathbb{R}$-linearly independent vectors $\underline{v}_1,...\underline{v}_k \in \mathbb{R}^k$.
\par\medskip\par
Let $\underline{x}$ be a reduced vector from $\mathbb{R}^k$. Suppose that $\underline{v}:= m_1 \underline{v}_1 + \cdots + m_k \underline{v}_k$ is a vector from that lattice $\Omega_k \subset \mathbb{R}^k$.
\par\medskip\par
{\bf 3.5.1. Higher dimensional M\"obius strips}
\par\medskip\par
Similar to the classical case in three dimensions one can introduce higher dimensional analogoues of the M\"obius strip by the factorization
$$
{\cal{M}}_k^{-} = \mathbb{R}^n/\sim
$$
where $\sim$ is now defined by the map $$(\underline{x}+\underline{v},x_{k+1},...,x_{n-1},x_n) \mapsto (x_1,...,x_k,x_{k+1},...,x_{n-1},{\rm sgn}(\underline{v})x_n).$$
Here, for $\underline{v} = m_1 \underline{v}_1 + \cdots + m_k\underline{v}_k$ we write
${\rm sgn}(\underline{v}) = \left\{ \begin{array}{cc} 1 & {\rm if}\; \underline{v} \in 2 \Omega_k \\ -1 & {\rm if}\; \underline{v} \in \Omega_k \backslash 2 \Omega_k. \end{array} \right. $

We recognize the classical M\"obius strip in the case $n=2,k=1$ in which the pair $(x_1+v_1,x_2,X)$ is mapped to $(x_1,-x_2,X)$ after one period.

Due to the switch of the minus sign in the $x_n$-component we indeed deal here with non-orientable manifolds, so ${\cal{M}}_k^-$ are not spin manifolds anymore.

\par\medskip\par

We can say more. Analogously, to the case of a spin manifold we can set up several distinct pin bundles (associated to the $Pin(n)$ group instead to the spin group $Spin(n)$), namely by mapping for instance the tupel
$$(\underline{x}+\underline{v},x_{k+1},...,x_n,X) \; {\rm to}\; (\underline{x}, x_{k+1},...,x_{n-1},{\rm sgn}(v) x_n,(-1)^{m_1+\cdots+m_k}X).$$
For simplicity let us first explain the construction for the trivial pin bundle of the manifold ${\cal{M}}_k^-$ where the tupel  $$(\underline{x}+\underline{v},x_{k+1},...,x_{n-1},x_n,X)$$ is mapped to $$(\underline{x},x_{k+1},...,x_{n-1},{\rm sgn}(\underline{v})x_n,X).$$
Now we can use the same periodization argument as used for the oriented $k$-cylinders in the previous subsection, in order to obtain an explicit formula for the fundamental solution of the regularized hypoelliptic Schr\"odinger operator on the non-oriented manifolds ${\cal{M}}_k^-$. However, instead of applying the ``symmetric'' periodization over the period lattice we have to apply the ``anti-symmetric'' periodization, induced by $\sim$.
\par\medskip\par
Again, let $e^{\epsilon}_{-}(x,t)$ be the fundamental solution to the hypoelliptic regularized Schr\"odinger operator $\Delta-{\bf k} \partial_t$ in Minkowski space-time. Then we may obtain the fundamental solution on the manifold ${\cal{M}}_k^- \times \mathbb{R}^+$ associated with the trivial bundle by the series
$$
P(x,t):=p_-\left( \sum_{\underline{v} \in \Omega_k}  e^{\epsilon}_{-}(\underline{x}+\underline{v},x_{k+1},...,x_{n-1},{\rm sgn}(v)x_n;t)\right)
$$
where $p_-$ now stands the canonical projection from $\mathbb{R}^k \to {\cal{M}}_k^- = \mathbb{R}^k/\sim$.
\par\medskip\par
Notice that each term $e^{\epsilon}_{-}(\underline{x}+\underline{v},x_{k+1},...,x_{n-1},{\rm sgn}(v)x_n;t)$ of the appearing series  actually is annihilated by the regularized hypoelliptic Schr\"odinger operator $\Delta-{\bf k} \partial_t$.

If a function $f(x_1,...,x_{n-1},x_n)$ is annihilated by the Laplacian $\Delta:=\sum_{i=1}^n \frac{\partial^2 }{\partial x_i^2}$ in the vector variable $(x_1,...,x_n)$, then the function $$g(x_1,...,x_{n-1},x_n):=f(x_1,...,-x_{n-1},x_n)$$ again turns out to be harmonic with respect to the same vector variable $(x_1,...,x_n)$. Since the Laplacian differentiates twice each variable, the minus sign is compensated after the second derivation in the $x_n$-direction. Since the minus sign change only occurs in a spatial variable, it has no influence on the variable $t$. Since the series is per construction invariant under $\sim$, it descends to a well-defined section on the manifold ${\cal{M}}_k^-$. On the manifold it is then the fundamental solution to the associated hypoelliptic regularized Schr\"odinger operator.
\par\medskip\par
In the cases of the other pin bundles that we mentioned, we need to add the corresponding minus sign in the sum in front of the anti-multiperiodic expression $P$. More precisely, the corresponding fundamental solution then is given by
$$
P^{(q)}(x,t):=p_-\left( \sum_{\underline{v} \in \Omega_l \oplus \Omega_{k-l}} (-1)^{m_1+\cdots+m_l} e^{\epsilon}_{-}(\underline{x}+\underline{v},x_{k+1},...,x_{n-1},{\rm sgn}(v)x_n;t)\right).
$$
{\bf 3.5.2. Higher dimensional generalizations of the Klein bottle}.
\par\medskip\par
Finally, we turn to discuss higher dimensional generalizations of the Klein bottle. To leave it simple we consider an $n$-dimensional normalized lattice of the form $\Omega_n:=\Omega_{n-1} + \mathbb{Z} e_n$ where $\Omega_{n-1} \subset \mathbb{R}^{n-1}$. Notice that every arbitrary $n$-dimensional lattice can be transformed into a lattice of the latter form by simply applying a rotation and a dilation.

Now we may introduce higher dimensional generalization of the classical Klein bottle by the factorization
$$
{\cal{K}}_n:=\mathbb{R}^n/\sim^*
$$
where $\sim^*$ is now defined by the map
$$
(\underline{x} + \sum_{i=1}^{n-1} m_i \underline{v}_i +(x_n + m_n) e_n) \mapsto(x_1,\cdots,x_{n-1},(-1)^{m_n} x_n).
$$
Alternatively, these manifolds can be constructed by gluing finitely many conformally flat manifolds together, which is according to \cite{Ry2003} another argument for being conformally flat.
Here, and in the remaining part of this subsection, $\underline{x}$ denotes a shortened vector in $\mathbb{R}^{n-1}$. In the case $n=2$ we obtain the classical Klein bottle. Notice that in contrast to the M\"obius strips, in this context here the minus sign switch occurs in one of the component on which the period lattice acts, too. As for the M\"obius strips we can again set up distinct pin bundles. By decomposing the complete $n$-dimensional lattice $\Omega_n$ into a direct sum of two sublattices $\Omega_n = \Omega_l \oplus \Omega_{n-l}$ we can again construct $2^n$ distinct pin bundles by considering the maps
$$
(\underline{x} + \sum_{i=1}^{n-1} m_i \underline{v}_i, x_n + m_n,X) \mapsto(x_1,\cdots,x_{n-1},(-1)^{m_n} x_n,(-1)^{m_1+\cdots+m_l}X).
$$
By similar arguments as before we can express the fundamental solution of the regularized Schr\"odinger operator  on the manifold ${\cal{K}}_n \times \mathbb{R}^+$ associated with values in that pin bundle by the series
$$
P(x,t):=p_*\left( \sum_{(\underline{v},m_n) \in \Omega_{n-1}\times \mathbb{Z}} (-1)^{m_1+\cdots+m_l}
e^{\epsilon}_{-}(\underline{x}+ \underline{v}+((-1)^{m_n} x_n + m_n) e_n;t)\right)
$$
where $p_*$ now stands the canonical projection from $\mathbb{R}^n \to {\cal{K}}_n = \mathbb{R}^n/\sim^*$. Again, for the trivial bundle the parity factor $(-1)^{m_1+\cdots+m_l}$ simplifies to $+1$.
\par\medskip\par
Each term $e^{\epsilon}_{-}(\underline{x}+ \underline{v}+((-1)^{m_n} x_n + m_n) e_n;t)$ of this series actually is in the kernel of the regularized hypoelliptic Schr\"odinger operator $\Delta-{\bf k} \partial_t$, for the same reason as for the M\"obius strip.
\par\medskip\par
{\bf Final remark}.
As in the cases treated in the previous section we can also obtain from these formulas a fundamental solution to the Schr\"odinger operator in the limit case $\epsilon \to 0$. To do so one has to apply the same procedure as explained previously, so we leave this as an exercise to the reader.

\subsection{The Laplacian for non-flat manifolds}\label{subsec3}

Finally, in this subsection we want to briefly outline how we can deal with non-flat arbitrary Minkowski manifolds.
\par\medskip\par
The classical Laplace operator is not suited for an arbitrary Minkowski manifold, since it fails to take into consideration its underlined geometric structure, e.g. its curvature or its non-Riemannian metric. Hence, we aim now to outline how we can extend some of the previous results to a Schr\"odinger-type operator  where the Laplace operator is replaced by the Bochner-Laplacian or by the G\"unter-Laplacian. For that, we shall write these equations in local cartesian coordinates and associated differential forms rather than using intrinsic metric tensor coordinates.

Differential forms have the advantage of fit naturally into integral formulation, since they provide immediate linkage between local and global geometry (topology) simplifying the arising expressions.

If we consider  an $(n+1)-$dimensional arbitrary and complete Minkowski manifold, say $(M,g)$, then in the case of the Bochner-Laplacian we need to impose that $\mathrm{Ric} > 0$ while for the G\"unter-Laplacian we require that  $2R^2 - \mathcal{G}R >0$. With these additional conditions and taking into account (\ref{c1.5*}) and (\ref{c1.4*}), we can establish analogous proofs to the previous results and may conclude that
\begin{itemize}
\item The operators $-\Delta_B$ and $-\Delta_G$ with domain $L_p^0(\bigwedge_k M) $  are dissipative for $1 < p < 3.$

\item $\ds ||\Gamma_t^\K u||_p \leq ||u||_p,$ for all $u \in L_p(\bigwedge_k M) \cap L_2(\bigwedge_k M)$ and $\ds \frac{3}{2} < p < 3$ and, therefore, $\{ \hat{ \Gamma }_t^\K  \}_{t \in \BR_0^+}$ extends to a contraction semigroup on $L_p^0(\bigwedge_k M)$ for $\ds \frac{3}{2} < p < 3$.
\end{itemize}


\section{The regularized Schr\"odinger problem}\label{sec4}

In this section, we show how the semigroup $\{ \Gamma_t^\K \}_{t \in \BR_0^+}$ is related to the regularized Schr\"odinger problem with initial condition. As a consequence of Theorem \ref{teo1} we immediately obtain
\begin{theorem}\label{teo2}
The initial value problem
\beqn
\left\{
\begin{array}{ll}
(-\Delta - \K \partial_t) v = 0, & \quad \mbox{ on } \Omega  \\
v(x,0)=u_0(x), & \quad \mbox{ on } \ul{\Omega}
\end{array}
\right.
\label{7*}
\eeqn
is solvable, with $v(\cdot, t) \in L_p(\Omega)$, whenever $u_0 \in L_p(\Omega)$ and  $\ds \frac{3}{2} < p < 3.$
\end{theorem}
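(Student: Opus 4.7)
The plan is to realize the solution as the orbit of the initial datum under the regularized Schr\"odinger semigroup constructed in the previous section. Concretely, I would define
\[
v(x,t) := (\Gamma_t^\K u_0)(x), \qquad t \in \BR_0^+,
\]
and then check that this candidate satisfies all three requirements of the theorem: membership in $L_p(\Omega)$ for every $t$, the differential equation on $\Omega$, and the initial condition on $\ul{\Omega}$.

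First I would handle the case $u_0 \in L_p(\Omega) \cap L_2(\Omega)$, which is the setting in which $\Gamma_t^\K u_0$ was originally defined via the spectral theorem. By Theorem \ref{teo1}, the bound $\|\Gamma_t^\K u_0\|_p \leq \|u_0\|_p$ ensures $v(\cdot,t) \in L_p(\Omega)$ for all $t \geq 0$, and the identity $\K \partial_t(\Gamma_t^\K u_0) = -\Delta(\Gamma_t^\K u_0)$ from the same theorem gives exactly the PDE $(-\Delta - \K\partial_t)v = 0$. The initial condition $v(x,0)=u_0(x)$ is the semigroup identity $\Gamma_0^\K = \mathrm{Id}$, read off from the spectral representation $\Gamma_t^\K = \int_0^{+\infty} e^{-t\lambda/\K}\,dE_\lambda$ at $t=0$.

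Next I would pass to general $u_0 \in L_p(\Omega)$ by density. Theorem \ref{teo1} asserts that $\{\Gamma_t^\K\}_{t \in \BR_0^+}$ extends to a contraction semigroup on $L_p^0(\Omega)$ for $\tfrac{3}{2} < p < 3$, and since $L_p^0(\Omega)$ is dense in $L_p(\Omega)$, one obtains a well-defined family of contractions on $L_p(\Omega)$. Taking $u_0^{(k)} \in L_p \cap L_2$ with $u_0^{(k)} \to u_0$ in $L_p$, the contractivity yields $\Gamma_t^\K u_0^{(k)} \to \Gamma_t^\K u_0$ in $L_p$, uniformly in $t$, and passing to the limit in the PDE (which at this stage is understood in the distributional sense, against test forms in $L_{p'}^0$) preserves the equation. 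The initial condition carries over by the strong continuity of the extended semigroup at $t=0^+$.

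The step I expect to be the main obstacle is promoting the distributional identity $(-\Delta - \K\partial_t)v = 0$ to a genuine (pointwise) solution on the open set $\Omega = \ul{\Omega}\times \BR^+$, together with justifying that $v(\cdot,t)$ depends continuously on $t$ in $L_p$ all the way down to $t=0$. For the first point I would invoke the hypoellipticity of the regularized operator $-\Delta - \K\partial_t$, noted at the very end of the proof of Theorem \ref{teo1} and guaranteed by the choice $\K = (\epsilon+i)/(\epsilon^2+1)$ with $\epsilon>0$ together with Theorem~1.8 of Section~1.3 of \cite{ABN}: any distributional solution is automatically $C^\infty$ inside $\Omega$, so the equation holds pointwise. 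The continuity at $t=0$ follows from the $C^0$-property of $\{\Gamma_t^\K\}$ on $L_p^0(\Omega)$ established in Theorem \ref{teo1}, extended by density, which gives $\|\Gamma_t^\K u_0 - u_0\|_p \to 0$ as $t \to 0^+$ and realizes the initial condition in the $L_p$ trace sense on $\ul{\Omega}$.
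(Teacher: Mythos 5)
Your proposal is correct and follows exactly the route the paper intends: the paper offers no separate proof, stating only that the result is an immediate consequence of Theorem \ref{teo1}, and your expansion (set $v=\Gamma_t^\K u_0$, use the contraction bound for membership in $L_p$, the identity $\K\partial_t(\Gamma_t^\K u_0)=-\Delta(\Gamma_t^\K u_0)$ for the PDE, density plus strong continuity at $t=0^+$ for general data and the initial condition, and hypoellipticity to upgrade to a pointwise solution) is precisely the argument being suppressed. Nothing further is needed.
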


The remaining open question of uniqueness is answered in the following statement
\begin{theorem}\label{teo3}
Let $v=v(x,t)$ be a solution of the regularized Schr\"odinger equation with $v(\cdot,t) \in L_p(\Omega)$ and  $\ds \frac{3}{2} < p < 3.$  Assume further that $||v(\cdot,t)||_p \leq a e^{-|\K| b t}$. Then there exists a uniquely determined function $u_0 \in L_p(\Omega)$, such that $v = \Gamma_t^\K u_0.$
\end{theorem}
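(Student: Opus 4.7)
The plan is to identify $u_0$ with the initial value of $v$ at $t=0^+$, and then use the uniqueness statement of Theorem~\ref{teo1} to conclude that $v$ is necessarily the orbit $\Gamma_t^\K u_0$.

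First I would define $u_0(x) := \lim_{t \to 0^+} v(x,t)$ and justify that this limit exists in $L_p(\ul{\Omega})$. Since $v$ solves the regularized Schr\"odinger equation, which is hypoelliptic for each fixed $\epsilon > 0$, and satisfies the global bound $\|v(\cdot,t)\|_p \le a e^{-|\K| b t}$ on $[0,\infty)$, the curve $t \mapsto v(\cdot,t)$ is absolutely continuous into $L_p(\ul{\Omega})$: the equation controls $\partial_t v$ in terms of $\Delta v$, so differences $v(\cdot,t) - v(\cdot,s)$ are controlled in $L_p$, which forces the existence of a trace $u_0 \in L_p(\ul{\Omega})$. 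Note that the bound at $t = 0^+$ gives $\|u_0\|_p \le a$, so $u_0$ belongs to the class on which $\Gamma_t^\K$ acts as a contraction by Theorem~\ref{teo1}.

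Next I would form the difference $w(x,t) := v(x,t) - \Gamma_t^\K u_0(x)$. Both summands lie in $L_p$ for $\frac{3}{2} < p < 3$ (the second by Theorem~\ref{teo1}), both solve the same regularized Schr\"odinger equation, and both have $u_0$ as their $L_p$-limit at $t = 0^+$ (the second by the strong continuity of the semigroup). Hence $w$ satisfies the regularized equation with $w(\cdot,0) = 0$ in $L_p$. To conclude $w \equiv 0$ I would follow the standard semigroup argument: the map $s \mapsto \Gamma_{t-s}^\K w(\cdot,s)$ is differentiable on $[0,t]$ with derivative zero, since $w$ is itself an orbit of the abstract Cauchy problem generated through $\Gamma_t^\K$ by the closure of $-\Delta$; comparing its values at $s = 0$ and $s = t$ then gives $\Gamma_t^\K w(\cdot,0) = w(\cdot,t)$, i.e.\ $w(\cdot,t) = 0$. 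The exponential bound on $v$ together with the contractivity $\|\Gamma_t^\K u_0\|_p \le \|u_0\|_p$ ensure that all arising quantities remain in $L_p$, so the uniqueness part of Theorem~\ref{teo1} is indeed applicable.

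Uniqueness of $u_0$ follows at once from strong continuity: if $\Gamma_t^\K u_0 = \Gamma_t^\K \tilde{u}_0$ for every $t > 0$, passing to the $L_p$-limit as $t \to 0^+$ yields $u_0 = \tilde{u}_0$. The main obstacle I foresee is the very first step, namely promoting the pointwise norm bound $\|v(\cdot,t)\|_p \le a e^{-|\K| b t}$ to genuine $L_p$-convergence of $v(\cdot,t)$ as $t \to 0^+$; this has to be extracted by combining the hypoellipticity of the regularized operator $-\Delta - \K \partial_t$ with the equation itself, which together upgrade uniform boundedness into absolute continuity of the curve $t \mapsto v(\cdot,t)$ in $L_p$.
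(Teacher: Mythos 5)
There are two genuine gaps here, and they are the two places where the paper has to do real work.

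First, your construction of $u_0$ as a \emph{strong} $L_p$ limit of $v(\cdot,t)$ as $t\to 0^+$ is not justified. The hypothesis only gives the bound $\|v(\cdot,t)\|_p \le a e^{-|\K| b t}$, i.e.\ uniform boundedness of the curve near $t=0$; to upgrade this to absolute continuity you would need control of $\partial_t v = -\K^{-1}\Delta v$ in $L_p$, which is not available (hypoellipticity gives interior smoothness, not quantitative $L_p$ bounds on $\Delta v$ up to $t=0$). The paper sidesteps this entirely: it takes $u_0$ only as a \emph{weak-star} limit of $v(\cdot,t_j)$ along some sequence $t_j\to 0$, which exists by boundedness and reflexivity of $L_p$ for $1<p<\infty$, and correspondingly only knows that $u(\cdot,t_j)=v(\cdot,t_j)-\Gamma_{t_j}^\K u_0\to 0$ in the distributional sense. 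You flagged this obstacle yourself, but the resolution is to weaken the mode of convergence, not to force strong convergence.

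Second, and more seriously, your argument that $w=v-\Gamma_t^\K u_0$ vanishes is circular. Differentiating $s\mapsto \Gamma_{t-s}^\K w(\cdot,s)$ and getting derivative zero requires that $w$ be a solution of the \emph{abstract} Cauchy problem in $L_p$ (a differentiable curve with values in the domain of the generator), but the hypothesis only gives a distributional/pointwise solution of the PDE with an $L_p$ bound. Whether such a solution must coincide with a semigroup orbit is exactly the content of the theorem; asserting that ``$w$ is itself an orbit of the abstract Cauchy problem'' assumes the conclusion. The paper's route around this is the key idea you are missing: form the Laplace transform $w_\lambda^\K(x)=\int_0^{+\infty} e^{-t\lambda/|\K|}u(x,t)\,dt$ (absolutely convergent thanks to the exponential decay), integrate by parts against test functions using the equation and the two limits above to obtain $\Delta w_\lambda^\K = -\K\lambda w_\lambda^\K$ in the distributional sense, invoke the Liouville-type Lemma~\ref{lem2} to conclude $w_\lambda^\K=0$ for all large $\lambda$, and then use injectivity of the Laplace transform to get $u\equiv 0$. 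It is Lemma~\ref{lem2}, not the semigroup identity, that rules out nontrivial $L_p$ solutions, and the exponential weight $e^{-|\K|bt}$ in the hypothesis exists precisely to make this Laplace transform converge. Your final uniqueness step (letting $t\to 0^+$ in $\Gamma_t^\K u_0=\Gamma_t^\K \tilde u_0$) is fine and matches the spirit of the paper's estimate (\ref{11*}).
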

\begin{proof}
In the following proof we denote the corresponding space solution by $L_p.$

If $\ds u_0 = \lim_{t_j \rightarrow 0} v(\cdot,t_j)$ in the weak star topology, $u=v-\Gamma_t^\K u_0$, then
\beqn
||u(\cdot,t)||_p \leq ae^{-|\K| bt} \label{8*}
\eeqn
and
\beqn
u(\cdot, t_j) \rightarrow 0, \quad \mbox{ when } t_j \rightarrow 0 \label{9*}
\eeqn
in the distributional sense.

Furthermore, $u$ satisfies the regularized  Schr\"odinger equation since each term does. We have to show that $u=0.$ To do this we consider the Laplace transform of $u$
\beqn
w_\lambda^\K (x) = \int_0^{+\infty} e^{-\frac{t \lambda}{|\K|}} u(x,t) ~ dt. \nonumber
\eeqn

According to (\ref{8*}) the integral converges absolutely for sufficiently large values of $|\K|\lambda$ and for almost all admissible  $x$. Moreover, $w_\lambda^\K \in L_p.$ Next we show that $\Delta w_\lambda^\K = -\K \lambda w_\lambda^\K$ holds in the distributional sense. For any $\psi \in L_p^0(\Omega)$
\beqn
\langle \psi, \Delta w_\lambda^\K \rangle & = &  \langle \Delta \psi, w_\lambda^\K \rangle \nonumber  \\
& = & \int_0^{+ \infty} e^{-\frac{t \lambda}{|\K|}} \langle \Delta \psi , u(\cdot, t) \rangle~ dt. \label{10*}
\eeqn

According to (\ref{8*}) the previous double integral converges absolutely for large $|\K| \lambda$. Using the regularized  Schr\"odinger equation
\beqn
\langle \Delta \psi , u(\cdot, t)\rangle = -\K \partial_t \langle \psi, u(\cdot, t)\rangle, \nonumber
\eeqn

we obtain via integration by parts
\begin{align*}
\langle \psi, \Delta w_\lambda^\K \rangle &= - \int_0^{+\infty} e^{-\frac{t \lambda}{|\K|}} \partial_t \langle \psi, u(\cdot,t) \rangle ~ dt \\
&= - \lim_{\begin{array}{c} t_j \rightarrow 0 \\  N \rightarrow +\infty \end{array}} \int_{t_j}^N e^{-\frac{t \lambda}{|\K|}} \partial_t \langle \psi, u(\cdot,t) \rangle ~ dt \\
&= - \lim_{\begin{array}{c}t_j \rightarrow 0 \\  N \rightarrow +\infty \end{array}} \left[ \lambda \int_{t_j}^N e^{-\frac{t \lambda}{|\K|}} \langle \psi, u(\cdot,t) \rangle ~ dt  + e^{- \frac{N \lambda}{|\K|}} \langle \psi, u(\cdot,N)\rangle \right. \\
& \left. \qquad \qquad \qquad \qquad \qquad \qquad \qquad -  e^{-\frac{t_j \lambda}{|\K|}} \langle \psi, u(\cdot,t_j)\rangle\right] \\
&= - \lambda \int_0^{+\infty} e^{-\frac{t \lambda}{|\K|}} \langle \psi, u(\cdot,t) \rangle ~ dt
\end{align*}
since $e^{- \frac{N \lambda}{|\K|}} \langle \psi, u(\cdot,N)\rangle$ by (\ref{8*}) and $e^{- \frac{t_j \lambda}{|\K|}} \langle \psi, u(\cdot,t_j)\rangle$ by (\ref{9*}).

Summarizing, we arrive at $\Delta w_\lambda^\K = -\K \lambda w_\lambda^\K$ in the distributional sense. By Lemma \ref{lem2} it follows that $w_\lambda^\K  = 0$. From the uniqueness of the complex Laplace transform we conclude that $u=0$ a.e.

If $v = e_-^\epsilon u'_0,$ then
\beqn
||u_0-u_0'|| \leq ||e_-^\epsilon u'_0 - u'_0|| + ||u_0 - e_-^\epsilon u_0|| + ||e_-^\epsilon u_0 - e_-^\epsilon u'_0||. \label{11*}
\eeqn

The first two terms tend to zero if $t \rightarrow 0,$ the third term equals to zero by hypothesis. Hence $u'_0 = u_0.$
\end{proof}

\begin{remark}
As we already have observed, the semigroup theory provides an elegant method for establishing existence and uniqueness results for the regularized Schr\"odinger problem. However, it is important to remark that the application of this theory was only possible since the coefficients are time-independent. In the case where the coefficients of the operator are time-dependent we would need to implement a Galerkin method (for more details see Section 7.1, \cite{Eva}).
\end{remark}

\par\medskip\par

As in Subsection \ref{subsec3}, we can extend the previous results to the setting of differential forms and can consider an arbitrary $(n+1)-$Minkowski manifold. Also here, we will need to impose additional technical  conditions concerning the positiveness of the curvatures of the manifold $M$.

In the case of differential forms we need to impose that $\mathrm{Ric} > 0$. In the case of the G\"unt\-er derivatives we need to impose that $2R^2 - \mathcal{G}R >0$. With these two additional conditions and taking into account the relations (\ref{c1.5*}) and (\ref{c1.4*}), we can establish analogous proofs and may conclude that in the case of differential forms the regularized Schr\"odinger problem is solvable when $v(\cdot, t) \in L_p(\bigwedge_k M)$ and $u_0 \in L_p(\bigwedge_k M),$ with $\ds \frac{3}{2} < p < 3$, independently of the choice of considering the Bochner or G\"unter-Laplacian.


\section{The general case}

It remains to study the behavior of our results when $\epsilon$ tends to zero. The implemented regularization procedure allowed us to compute a solution of the regularized Schr\"odin\-ger equation in a stable way and to obtain a solution similar to the solution of the Schr\"odinger problem
\beqn
\left\{
\begin{array}{ll}
(-\Delta - i \partial_t) v = 0, & \quad \mbox{ on } \Omega  \\
v(x,0)=u_0(x), & \quad \mbox{ on } \ul{\Omega}
\end{array}
\right.
\label{13*}
\eeqn
when $\epsilon$ is small.

Applying the regularization procedure described in Subsection  \ref{subsec2.2}, the family of operators $-\Delta - \K \partial_t$ converges to $-\Delta-i\partial_t$ when $\epsilon$ tends to zero. In the same subsection it was indicated that the elements of the family are hypoelliptic operators, while the Schr\"odin\-ger operator is not. This fact implies that the results presented in Section 3 cannot be adapted directly to the Schr\"odinger operator because they depend on the hypoellipticity of the operator.

However, taking into account \cite{Isa} (Section 2.4), we can say that our regularization procedure corresponds to a stabilizing functional for the Schr\"odinger operator, where $\epsilon$ is the regularization parameter. Hence we can present existence and uniqueness results for the solution of problem (\ref{13*}) (which are the correspondent for the general case of Theorems \ref{teo2} and \ref{teo3})
\begin{theorem}
The initial value problem (\ref{13*}) is solvable, with $v(\cdot, t) \in L_p(\Omega)$, whenever $u_0 \in L_p(\Omega)$ and $\ds \frac{3}{2} < p < 3.$
\end{theorem}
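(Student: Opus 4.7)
The plan is to obtain the solution of (\ref{13*}) as a weak limit of the regularized solutions produced by Theorem~\ref{teo2}, exploiting the fact that the regularization $\K = \frac{\epsilon+i}{\epsilon^2+1}\to i$ as $\epsilon\to 0^+$, together with the uniform $L_p$-contraction estimate $\|\Gamma_t^{\K}u\|_p\le \|u\|_p$ valid for $\tfrac32<p<3$ that was established in Theorem~\ref{teo1}. Throughout I will freely use that $L_p(\Omega)$ is reflexive in this range, so that bounded families admit weakly convergent subsequences.

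First, given $u_0\in L_p(\Omega)$ with $\tfrac32<p<3$, for every $\epsilon>0$ Theorem~\ref{teo2} provides a solution $v_\epsilon(x,t):=\Gamma_t^{\K}u_0$ of the regularized Cauchy problem
\begin{equation*}
(-\Delta-\K\partial_t)v_\epsilon=0\ \text{on}\ \Omega,\qquad v_\epsilon(\cdot,0)=u_0.
\end{equation*}
By the contraction property, $\|v_\epsilon(\cdot,t)\|_p\le \|u_0\|_p$ uniformly in $\epsilon$ and $t$, so the family $\{v_\epsilon\}_{\epsilon>0}$ is bounded in $L_\infty(\mathbb{R}^+;L_p(\ul{\Omega}))$. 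By the Banach--Alaoglu theorem one can extract a subsequence $\epsilon_j\to 0^+$ such that $v_{\epsilon_j}\rightharpoonup v$ in the weak-$\ast$ sense, with $v(\cdot,t)\in L_p(\ul{\Omega})$ and $\|v(\cdot,t)\|_p\le \|u_0\|_p$.

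Next I would pass to the limit in the equation. For any test section $\varphi\in C_c^\infty(\Omega)$ one has
\begin{equation*}
\int_\Omega v_{\epsilon_j}\,(-\Delta\varphi+\K\partial_t\varphi)\,dx\,dt=0,
\end{equation*}
which is legitimate since the $v_{\epsilon_j}$ satisfy the regularized equation in the distributional sense (by the hypoellipticity argument invoked in the proof of Theorem~\ref{teo1}). Because $\K\to i$ and because $v_{\epsilon_j}\rightharpoonup v$ weakly against the $L_{p'}$ test function $-\Delta\varphi+\K\partial_t\varphi$, the limit identity
\begin{equation*}
\int_\Omega v\,(-\Delta\varphi+i\,\partial_t\varphi)\,dx\,dt=0
\end{equation*}
follows, i.e.\ $v$ solves $(-\Delta-i\partial_t)v=0$ in the distributional sense. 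This is precisely the sense in which problem (\ref{13*}) has to be interpreted here, as the Schr\"odinger operator is no longer hypoelliptic and one cannot hope for pointwise regularization.

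The main obstacle is the initial condition. Since $\Gamma_t^{\K}$ is a strongly continuous semigroup on $L_p\cap L_2$, for every fixed $\epsilon>0$ one has $v_\epsilon(\cdot,t)\to u_0$ in $L_p$ as $t\to 0^+$; however, after passing to the weak limit in $\epsilon$ this continuity in $t$ is delicate. I would handle this by appealing to the stabilizing--functional interpretation recalled from \cite{Isa}: the two-parameter limit $(\epsilon,t)\to(0^+,0^+)$ along any admissible pairing yields $v_\epsilon(\cdot,t)\rightharpoonup u_0$, because on test functions
\begin{equation*}
\langle\varphi,v_\epsilon(\cdot,t)-u_0\rangle=\langle\varphi,(\Gamma_t^{\K}-I)u_0\rangle\xrightarrow[t\to0^+]{}0
\end{equation*}
uniformly in $\epsilon$ on a dense subset (e.g.\ $\varphi\in D_\Delta\cap L_{p'}^0$), and this density plus the uniform bound $\|v_\epsilon(\cdot,t)\|_p\le \|u_0\|_p$ allows one to conclude that the limit $v$ attains the initial datum $u_0$ in the weak-$\ast$ sense at $t=0$. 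Uniqueness in the class $\|v(\cdot,t)\|_p\le a e^{-|\K|bt}$ then transfers from Theorem~\ref{teo3} in the limit, since the Laplace transform argument there only uses the $L_p$ bound, the equation in the distributional sense, and the weak attainment of the initial value, all of which survive the limit $\epsilon\to 0^+$.
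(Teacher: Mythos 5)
Your proposal does substantially more than the paper does at this point. The paper offers no proof of this theorem at all: it merely observes that the family $-\Delta-\K\partial_t$ converges to $-\Delta-i\partial_t$ as $\epsilon\to 0^+$, invokes Isakov's notion of a stabilizing functional (Section 2.4 of \cite{Isa}) to say that the regularization is an admissible one, and then states the two limit theorems as the ``correspondents'' of Theorems \ref{teo2} and \ref{teo3}. Your weak-$\ast$ compactness argument --- uniform contraction bound from Theorem \ref{teo1}, Banach--Alaoglu in $L_\infty(\BR^+;L_p)$ for $\frac{3}{2}<p<3$, and passage to the limit in the distributional formulation using that $-\Delta\varphi+\K\partial_t\varphi\to-\Delta\varphi+i\partial_t\varphi$ strongly against the weakly convergent $v_{\epsilon_j}$ --- is a concrete and legitimate implementation of what the paper only gestures at, and it correctly identifies that the limit solution can only be distributional since hypoellipticity is lost at $\epsilon=0$.

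Two places need shoring up. First, the attainment of the initial datum: you assert that $\langle\varphi,(\Gamma_t^{\K}-I)u_0\rangle\to 0$ as $t\to 0^+$ \emph{uniformly in} $\epsilon$, but give no reason. This can be repaired: since $\K^{-1}=\epsilon-i$, one has $|\K|^{-1}=\sqrt{\epsilon^2+1}\le\sqrt{2}$ for $\epsilon\le 1$, and for $\varphi$ in the (dense) domain of $\Delta$ the equation gives $\partial_t\langle\varphi,v_\epsilon(\cdot,t)\rangle=-\K^{-1}\langle\Delta\varphi,v_\epsilon(\cdot,t)\rangle$, whose modulus is bounded by $\sqrt{2}\,\|\Delta\varphi\|_{p'}\|u_0\|_p$ uniformly in $\epsilon$ and $t$; this equicontinuity is what actually justifies both the uniform convergence at $t=0$ and the identification of $v(\cdot,t)$ pointwise in $t$ (weak-$\ast$ convergence in $L_\infty(\BR^+;L_p)$ alone only determines $v(\cdot,t)$ for a.e.\ $t$, so without the equicontinuity step the phrase ``$v(\cdot,t)\in L_p$ with $v(\cdot,0)=u_0$'' is not yet meaningful). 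Second, your closing paragraph on uniqueness addresses the \emph{next} theorem of the paper, not this one, and can be dropped here. With the equicontinuity remark inserted, your argument is a self-contained proof of a statement the paper leaves unproved; be aware, though, that it inherits whatever is assumed in Theorem \ref{teo1}, since the uniform-in-$\epsilon$ contraction bound is the engine of the whole construction.
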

\begin{theorem}
Let $v=v(x,t)$ be a solution of the Schr\"odinger equation with $v(\cdot,t) \in L_p(\Omega)$ and $\ds \frac{3}{2} < p < 3.$ Assume further that $||v(\cdot,t)||_p \leq a e^{-b t}$. Then there exists a uniquely determined $u_0 \in L_p(\Omega)$ such that $v = \Gamma_t u_0.$
\end{theorem}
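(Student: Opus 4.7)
The plan is to mirror the strategy employed in the proof of Theorem \ref{teo3}, performing the same construction but directly on the limiting Schr\"odinger semigroup $\Gamma_t = \lim_{\epsilon \to 0^+} \Gamma_t^{\K}$. The key legitimizing observation is, as indicated in the paragraph preceding the statement, that the regularization procedure of Subsection \ref{subsec2.2} acts as a stabilizing functional in the sense of \cite{Isa}, Section 2.4. This ensures that the $L_p$-contractivity and representation statements obtained in Theorems \ref{teo1}--\ref{teo2} for the regularized semigroup persist, after taking $\epsilon \to 0^+$, as properties of $\Gamma_t$ on the range $\frac{3}{2} < p < 3$.

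First I construct the candidate initial datum. The hypothesis $||v(\cdot,t)||_p \le a e^{-bt}$ provides a uniform $L_p$-bound for $\{v(\cdot,t_j)\}$ along any sequence $t_j \to 0^+$. Since the range $\frac{3}{2} < p < 3$ contains its own conjugate exponent $p'$, we have $L_p(\Omega) = (L_{p'}(\Omega))^\ast$, and the Banach-Alaoglu theorem yields a subsequence with a weak-star limit $u_0 \in L_p(\Omega)$. Setting $u := v - \Gamma_t u_0$, linearity gives that $u$ solves the Schr\"odinger equation, inherits an exponential bound of the same shape, and satisfies $u(\cdot,t_j) \to 0$ distributionally. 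It then remains to show $u \equiv 0$, from which existence and uniqueness both follow.

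To reach $u \equiv 0$, form the Laplace transform
\beqn
w_\lambda(x) := \int_0^{+\infty} e^{-t\lambda}\, u(x,t)\, dt, \nonumber
\eeqn
which converges absolutely in $L_p(\Omega)$ for real $\lambda > b$ by the exponential bound. Pairing with $\psi \in L_{p'}^0(\Omega)$, using $\langle \Delta \psi, u(\cdot,t)\rangle = -i\partial_t \langle \psi, u(\cdot,t)\rangle$ (the Schr\"odinger equation tested against $\psi$), and integrating by parts in $t$ exactly as in the proof of Theorem \ref{teo3}, the boundary term at $N \to +\infty$ vanishes by exponential decay while the term at $t_j \to 0^+$ vanishes by the distributional limit of $u(\cdot,t_j)$. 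One obtains $\Delta w_\lambda = -i\lambda w_\lambda$ in the distributional sense. Applying Lemma \ref{lem2} with complex spectral parameter --- exactly the extension already relied upon at the end of Theorem \ref{teo3}'s proof, where $\K\lambda \notin \BR$ --- forces $w_\lambda = 0$ for all sufficiently large $\lambda$; injectivity of the complex Laplace transform then gives $u \equiv 0$ almost everywhere. Uniqueness of $u_0$ itself then follows from the same triangle-inequality decomposition \eqref{11*} used at the end of the proof of Theorem \ref{teo3}.

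The main obstacle is that the pure Schr\"odinger operator $-\Delta - i\partial_t$ is not hypoelliptic, so Section~\ref{sec4}'s constructions cannot be quoted verbatim. The whole argument therefore hinges on the stabilizing-functional interpretation of the regularization being robust enough to guarantee compatibility between the limit $\epsilon \to 0^+$, the weak-star extraction of $u_0$, the integration-by-parts step, and the application of Lemma \ref{lem2}. A secondary technical subtlety is that Lemma \ref{lem2} is literally stated for real $\mu > 0$ whereas here we require the analogue at $\mu = i\lambda$; however, inspection of the cutoff-and-Cauchy-Schwarz mechanism in its proof shows that the argument carries over provided $\mathrm{Re}(\mu) \ge 0$, which is exactly the regime delivered in the $\K \to i$ stabilized limit.
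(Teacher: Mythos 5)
Your proposal is necessarily a different route from the paper's, because the paper in fact offers no proof of this statement at all: the authors merely remark that the regularization of Subsection \ref{subsec2.2} ``corresponds to a stabilizing functional'' in the sense of \cite{Isa} and then assert the theorem as the $\epsilon\to 0^+$ counterpart of Theorem \ref{teo3}. What you do instead is rerun the proof of Theorem \ref{teo3} verbatim at $\epsilon=0$: weak-star extraction of $u_0$ from the bounded family $v(\cdot,t_j)$, the Laplace transform $w_\lambda$, integration by parts, the eigenvalue equation $\Delta w_\lambda=-i\lambda w_\lambda$, Lemma \ref{lem2}, and the triangle inequality (\ref{11*}). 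Your secondary observation is correct and worth making explicit: Lemma \ref{lem2} does survive at $\mu=i\lambda$, because the only place positivity of $\mu$ is used is the inequality $-\mu\langle\phi^2h_1(|u|)u,u\rangle\le 0$, whose real part is still $\le 0$ (indeed $=0$) for purely imaginary $\mu$; taking real parts throughout (\ref{6*})--(\ref{4*}) and noting that $\langle\phi^2h_1(|u|)Du,Du\rangle$ is real and nonnegative, the cutoff argument goes through for any $\mu\ne 0$ with $\mathrm{Re}\,\mu\ge 0$. The paper itself already applies Lemma \ref{lem2} with the complex parameter $\K\lambda$ inside the proof of Theorem \ref{teo3}, so this extension is implicitly required there as well.

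The genuine gap is your opening step, and it is the same gap the paper leaves open: you assert that the $L_p$-contractivity of $\Gamma_t^{\K}$ ``persists'' in the limit and that $\Gamma_t=\lim_{\epsilon\to 0^+}\Gamma_t^{\K}$ is a contraction semigroup on $L_p(\Omega)$ for $\frac{3}{2}<p<3$ whose orbits solve the Schr\"odinger equation. Nothing in the paper (or in your argument) establishes convergence of $\Gamma_t^{\K}$ on $L_p$ as $\epsilon\to 0^+$, and the uniform-in-$\epsilon$ bound $\|\Gamma_t^{\K}u\|_p\le\|u\|_p$ alone does not produce a limit operator; the appeal to the ``stabilizing functional'' of \cite{Isa} is a heuristic, not a convergence theorem. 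This is not a cosmetic issue: your construction needs $\Gamma_t u_0$ to be a genuine $L_p$-valued solution in order for $u=v-\Gamma_t u_0$ to inherit the exponential bound and satisfy the equation distributionally, and the loss of hypoellipticity at $\epsilon=0$ (which the paper explicitly flags as the obstruction to adapting Section 3) is precisely what prevents one from upgrading the limiting distributional identities. Everything downstream of this point in your proposal is sound, but the proposal as written proves only the conditional statement ``if the limit semigroup exists on $L_p$ with the stated properties, then $u_0$ exists and is unique'' --- which, to be fair, is also all the paper establishes.
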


\textbf{Acknowledgement: } The second author wishes to express his gratitude to \textit{Funda\c c\~ao para a Ci\^encia e a Tecnologia} for the support of his work via the grant \texttt{SFRH/BPD/73537/2010}.

The third author wishes to express his gratitude to \textit{Funda\c c\~ao para a Ci\^encia e a Tecnologia} for the support of his work via the grant \texttt{SFRH/BPD/65043/2009}.


\end{document}